\documentclass{article}
\usepackage{waingarten}
\usepackage{graphicx}
\usepackage{tikz}
\usepackage{stmaryrd}
\usepackage{tcolorbox}

\algdef{SE}[SUBALG]{Indent}{EndIndent}{}{\algorithmicend\ }%
\algtext*{Indent}
\algtext*{EndIndent}

\usetikzlibrary{arrows.meta, positioning, shapes.geometric, fit, quotes, calc}

\newcommand{\Dec}{\mathrm{Dec}}

\newcommand{\hypercube}{\{0,1\}^d}
\newcommand{\linspan}{\mathrm{span}}

\newcommand{\Ham}{\mathrm{Ham}}

\providecommand{\email}[1]{\texttt{#1}}

\title{A Near-Optimal Space Lower Bound for Euclidean Diameter Estimation in Dynamic Streams}

\author{
  Ashwin Padaki\thanks{Supported by the National Science Foundation (NSF) GRFP under Grant No. DGE-2236662, and Grant No. CCF-2337993. \{\email{apadaki@seas.upenn.edu}\}}  
  \and 
  Krish Singal\thanks{Supported by the National Science Foundation (NSF) under Grant No. CCF-2337993 \{\email{ksingal@seas.upenn.edu}\}} \\[3mm]University of Pennsylvania
  \and 
  Erik Waingarten\thanks{Supported by the National Science Foundation (NSF) under Grant No. CCF-2337993 \{\email{ewaingar@seas.upenn.edu}\}} 
}

\begin{document} 

\maketitle 

\begin{abstract}
We study the space complexity of diameter estimation for a set of points in Euclidean space in the dynamic (turnstile) streaming model. The seminal work of Indyk~\cite{I03} gives a $c$-approximation to the Euclidean diameter of $n$ vectors using \smash{$n^{O(1/c^2)}$} space. Our main contribution is giving an essentially matching lower bound. Any dynamic streaming algorithm which can $c$-approximate the diameter of $n$ Euclidean vectors must use \smash{$n^{\tilde{\Omega}(1/c^2)}$} space.
\end{abstract} 
\newpage
\section{Introduction}

We study streaming algorithms for estimating the diameter of a set of points in a high-dimensional Euclidean space. Specifically, we work within Indyk's \emph{geometric streaming model}~\cite{I04b}, where an input stream $\sigma$ consists of insertions and deletions of points $x\in \{0, \dots, \Delta\}^d$. The stream $\sigma$ implicitly defines a (multi-)subset of $\{0,\dots, \Delta\}^d$, encoded by a frequency vector \smash{$v\in\Z_{\geq 0}^{\{0,\dots, \Delta\}^d}$}. We will study algorithms which can approximately compute the Euclidean diameter of the surviving point set:
 \[
    \diam(v) \;=\; \max\bigl\{\,\|x - y\|_2 \;:\; v_{x}, v_{y} > 0 \,\bigr\}.
\]
In the dynamic (or turnstile) streaming model, the vector $v$ is updated through a stream of insertions and deletions. Throughout, we let $N := |\{ 0, \ldots, \Delta\}^d| = (\Delta+1)^d$, and we let $n := \|v\|_0$ denote the number of surviving points at the end of the stream; all the upper and lower bounds in this paper will be in terms of $n$. This work will study the bit-complexity of a streaming algorithm as a function of $n$ and the desired approximation $c >  1$ for the gap-decision version of the problem: the streaming algorithm is instantiated for a fixed scale $r>0$ and approximation factor $c > 1$, and after observing a stream of insertions and deletions to the frequency vector $x$, the algorithm must distinguish whether $\diam(x) \le r$ or $\diam(x) \ge cr$.\footnote{The gap-decision version of a problem can always be solved given an algorithm to the estimation version, but the reverse direction naively incurs a multiplicative $\log(d\Delta)$-factor in the space complexity.}

In insertion-only streams, the space complexity of Euclidean diameter is relatively well-understood. Agarwal and Sharathkumar~\cite{AS15} first gave a $(\sqrt{2}+\eps)$-approximation in $O(d\log(1/\eps)/\eps^3)$ space, which was recently improved to $O(d\log(1/\eps) / \eps^2)$~\cite{HMM25}.\footnote{Note, we may assume $d = O(\log n / \eps^2)$ via the Johnson-Lindenstrauss lemma.} Furthermore,~\cite{AS15} gave a lower bound showing that any approximation ratio below $\sqrt{2}$ requires space growing linearly in $n$ or exponentially in $d$. The picture is a lot less clear in dynamic streams, and settling the space complexity was raised as an open question by Krauthgamer~\cite{K22}.

The best upper bound is implied by a work of Indyk~\cite{I03}. While that work was primarily focused on dynamic data structures for Euclidean approximate furthest neighbor, it leads to a dynamic streaming algorithm for Euclidean diameter using $\tilde{O}(n^{2/(c^2-1)})$ space.\footnote{In~\cite{I03}, the streaming result is technically stated for insertion-only streams with space $\tilde{O}(n^{1/(c^2-1)})$. To implement the algorithm in dynamic streams, one can use $\ell_0$-sampling (which was not developed until later~\cite{JST11}), and a careful analysis reveals that the complexity degrades slightly to \smash{$\tilde{O}(n^{2/(c^2-1)})$}. This quadratic increase comes from the need to union bound over the non-contraction/expansion of $O(n^2)$ pairs of points rather than $O(n)$ pairs (as in the insertion-only setting).} For ease of comparison, we include an exposition of Indyk's algorithm for diameter estimation in dynamic streams in Section~\ref{sec: upper-bound}. Since Indyk's algorithm, there have been no further improvements (either upper or lower bounds) on dynamic streaming algorithms for Euclidean diameter.

\subsection{Our Result} \label{sec:results}

Our main result is showing that Indyk's algorithm for diameter estimation in dynamic streams is essentially optimal---Indyk's algorithm uses $n^{O(1/c^2)}$ space for a $c$-approximation, and our lower bound shows $n^{\tilde{\Omega}(1/c^2)}$ space is necessary. In Theorem~\ref{thm:lb-intro} below, a randomized streaming algorithm $\calA$ for the $(r,c,n)$-diameter problem processes an arbitrary stream of insertions and deletions (of arbitrary length and arbitrary frequency) while maintaining an internal state and access to public randomness. The space complexity of $\calA$ is given by $\calS^+(\calA, m)$ and is a function of $m \in \N$, encoding the logarithm of the distinct number of final internal states that $\calA$ may hold after processing an arbitrary stream whose final frequency vector $v \in \Z^{\{0,\dots, \Delta\}^d}$ has $v_x \in \{0,\dots, m\}$ for all $x \in \{0,\dots, \Delta\}^d$ (see Section~\ref{sec: prelims}).

\begin{restatable}{theorem}{StreamLb}\label{thm:ln-sketch-lb}\label{thm:lb-intro}
    Any randomized streaming algorithm $\calA$ which solves the $(r,c,n)$-diameter problem using bounded randomness complexity has $\calS^+(\calA, m) = n^{\tilde{\Omega}(1/c^2)} \cdot \log(m+1)$ for all $m \in \N$.
\end{restatable}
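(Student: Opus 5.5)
Our plan follows the standard route for turnstile lower bounds. First we reduce from streaming algorithms to linear sketches, and then we prove a lower bound for linear sketches using a hard pair of distributions. For the first step we invoke the Li--Nguyen--Woodruff / Ai--Hu--Li--Woodruff style reduction. Any algorithm with bounded randomness complexity that processes arbitrary turnstile streams can be simulated by a linear sketch $v \mapsto Sv$ over the integers, applied to frequency vectors with entries in $\{0,\dots,m\}$. The space cost of this simulation is the logarithm of the number of distinct sketch values. This is where the $\log(m+1)$ factor in $\calS^+(\calA,m)$ comes from: a sketch with $k$ rows that is correct on vectors with entries in $\{0,\dots,m\}$ needs about $k\log(m+1)$ bits. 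It therefore suffices to show that any linear sketch with $k$ rows that distinguishes $\diam(v)\le r$ from $\diam(v)\ge cr$ with constant probability must have $k \ge n^{\tilde\Omega(1/c^2)}$.

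For the sketching lower bound we embed a hard instance into Hamming space and use the isometry $\|x-y\|_2=\sqrt{\Ham(x,y)}$ on $\hypercube$. An approximation factor $c$ in Euclidean distance then becomes factor $c^2$ in Hamming distance, which is exactly why the exponent is $1/c^2$. The construction runs as follows.
\begin{itemize}
\item Fix $d=\Theta(\log n)$ (up to polylogarithmic slack) and a centre $z$.
\item In the NO case, the $n$ surviving points are random $\rho$-correlated copies of $z$ with $\rho$ chosen so that pairwise Hamming distances are about $r^2$.
\item In the YES case, additionally plant one pair of points at Hamming distance $c^2 r^2$, for example a point and a near-antipodal partner. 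Equivalently, plant a correlated structure in which a single point is far from the rest.
\end{itemize}
The sketch sees $Sv$, where $v$ is the indicator vector of the chosen set. We then bound the total variation distance between the two distributions of $Sv$ via a Fourier/hypercontractivity argument on the noise operator $T_\rho$. The required bound on $k$ comes from the fact that the probability that a fixed set of about $n^{1/c^2}$ sketch coordinates can ``detect'' the planted far pair is governed by the noise-stability bound $\Pr[\text{collision}]\approx n^{-\Theta(1/c^2)}$, the same quantity that makes LSH-type upper bounds (and Indyk's algorithm) tight.

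The main obstacle is the information-theoretic step: proving that $k=n^{o(1/c^2)}$ linear measurements leave the two distributions $\Omega(1)$-close. Our first attempt will be a $\chi^2$ or second-moment computation. Condition on $S$ (by Yao's principle, a deterministic sketch), and write the difference of $Sv$ under YES and NO as a sum over points. Each row of $S$ is a function on $\hypercube$, so its correlation with the planted perturbation can be expanded in the Fourier basis. Under $T_\rho$, level-$\ell$ coefficients decay like $\rho^\ell$, and because of the large number of background points, a sketch row cannot concentrate mass on low levels without being swamped by noise.

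Balancing these two effects gives a bound on the detection advantage of roughly $k\cdot n^{-\Omega(1/c^2)}$, up to the polylogarithmic losses that become the $\tilde\Omega$. Integrality of the sketch and the need for bounded entries are handled by the reduction step and by randomly shifting the point set. We then combine everything: the reduction gives the $\log(m+1)$ factor, and the sketching bound gives the $n^{\tilde\Omega(1/c^2)}$ factor.
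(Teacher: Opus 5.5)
\textbf{There is a genuine gap in the sketching half of your plan.} Your reduction step is in the right spirit. The paper uses the \cite{KPSW25} refinement of the \cite{LNW14} reduction for scale-invariant functions, which gives a sketch with $s \le \calS^+(\calA,m)/\log(m+1)$ rows. The problem is what that reduction hands you: an \emph{exact} integer linear map with enormous entries (here up to $O((m^2N)^{N/2})$) and an arbitrary decoder.

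Against such sketches, no lower bound on the number of rows is possible for your hard instance. Since $v$ is a $0/1$ indicator vector, the single row $(1,2,4,\dots,2^{N-1})$ determines $v$, and hence the diameter, exactly. So your claimed detection advantage of roughly $k\cdot n^{-\Omega(1/c^2)}$ is false. Any $\chi^2$, Fourier or hypercontractivity argument that implicitly treats each sketch coordinate as carrying bounded information is aimed at the wrong model, and randomly shifting the point set does not change this.

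\textbf{The missing idea is robustness to unbounded precision.} The background points must carry random multiplicities that are huge relative to the sketch's entries. The paper draws them uniformly from $\{1,\dots,\sfP\sfU\}$, with $\sfP,\sfU$ depending on $s$ and $\lambda$. This is why one needs a reduction whose sketch is correct on distributions of arbitrary magnitude while its size depends only on $\calS^+(\calA,m)$.

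Correctness then forces a purely linear-algebraic condition. For most $(\bx,\by)$, the column of $T$ indexed by $\by$ must lie outside the span of the columns indexed by $\calB(\bx,\tau)$. Otherwise a short integer kernel vector toggles $\by$ without changing $T\bv$.

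The heart of the proof is therefore a dimension lower bound for these ball-subspace-evading maps. This is a rank statement about a matrix that vanishes on the Hamming band $\{(x,z): z\in\calB(x,\tau)\}$, not a statistical one. The paper proves it in three steps:
\begin{enumerate}
\item Tensor to obtain a nonnegative rank-$s^2$ matrix vanishing on the band.
\item Apply a Forster transform to get squared Frobenius norm $\Omega(4^d/s^2)$ with entries in $[0,1]$.
\item Use trace/operator-norm duality against a matrix that equals $1$ off the band and has spectral norm $2^d/\exp(\Omega(\tau^2/d))$. That matrix is built from Krawtchouk polynomials and the approximate-degree polynomial for $\mathsf{OR}$.
\end{enumerate}
Your noise-stability heuristic offers no substitute for this rank lower bound.
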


As we elaborate on in the technical overview, the primary challenge in establishing Theorem~\ref{thm:lb-intro} is proving a lower bound on the dimensionality of linear sketches solving the $(r,c,n)$-diameter problem. We use a recent framework of~\cite{KPSW25}, who study diameter estimation in general (non-Euclidean) metrics and gave sketching-to-streaming reductions for scale-invariant problems building on prior work of~\cite{LNW14}. We emphasize that in Theorem~\ref{thm:lb-intro}, the hypothesis is algorithm $\calA$ solves the $(r,c,n)$-diameter problem over any streams (in particular, streams of arbitrary length and of arbitrary magnitude) and utilizes randomness complexity which can be an arbitrary function of $r, n, c$, $d$ and $\Delta$ but importantly not the stream length---all of these assumptions are satisfied by Indyk's algorithm. Up to these conditions, we consider Theorem~\ref{thm:lb-intro} a strong negative answer to Krauthgamer's question~\cite{K22}.

\subsection{Related Work}
\label{sec:related}

\textbf{Space-approximation tradeoffs in geometric streaming.} Many geometric streaming problems in high dimensions exhibit a tradeoff where the required space is polynomial in $n$, with an exponent that decays with the approximation factor $c$. Examples include Euclidean minimum spanning tree ($n^{O(1/\sqrt{c})}$ space~\cite{CJLW21, CCJLW23}), Euclidean facility location ($n^{O(1/c)}$ space \cite{CJKVY22}), and Earth Mover's Distance over a two-dimensional $[\Delta]\times [\Delta]$ grid (using $\Delta^{O(1/c)}$ space~\cite{ABIW09}). However, only substantially weaker lower bounds are known:~\cite{AIK08} shows that $\Omega(d/c)$ (or equivalently $\Omega(\log n / c)$) bits are necessary for a $c$-approximation to the Earth Mover's distance over $(\{0,1\}^d, \ell_1)$. The work of~\cite{CJLW21} adapted a qualitatively similar lower bound for streaming the minimum spanning tree over $(\{0,1\}^d, \ell_1)$, and~\cite{AKR15} shows that the Earth Mover's distance over $[\Delta]\times [\Delta]$ cannot be sketched with constant bits and constant approximation. This work is the first to give substantially stronger lower bounds for any (natural) high-dimensional Euclidean streaming problem.

\textbf{Reductions from streaming to linear sketching.} An important technique used is a reduction from streaming to linear sketching from~\cite{KPSW25}, building on a line-of-work first developed by~\cite{G08, LNW14} and further studied by~\cite{KP20}. The reduction from~\cite{KPSW25} requires the streaming algorithm correctly compute on \emph{arbitrarily long} streams whose frequency vectors have \emph{arbitrary magnitude} and have bounded-randomness complexity. After the work of \cite{LNW14}, there has been interest in reducing the length of the necessary streams~\cite{HLY19, JLY26}. In particular, the very recent~\cite{JLY26} gave substantial progress by giving streaming-to-sketching reductions with polynomial-length streams (for bounded-magnitude frequency vectors). At the moment, it is unclear to us whether we can apply~\cite{JLY26} to rule out streaming algorithms for the diameter on polynomially-long streams.\footnote{The reason is because here and in~\cite{KPSW25}, the streaming algorithm $\calA$ is assumed to work for unbounded streams with unbounded magnitudes, and the space complexity is measured as a function of the magnitude of frequency vectors; $\calS^+(\calA, m)$ is the number of bits needed to specify every final internal state reachable by arbitrary streams whose final frequency vector has magnitude at most $m$. Importantly, the output linear sketch $(T, \Dec)$ from~\cite{KPSW25} correctly computes on any distribution over frequency vectors (even those with very large magnitude), while the dimensionality and granularity of $T$ depend only on $\calS^+(\calA, 1)$. This means that the magnitude $m$ of the frequency vectors in our distribution depends on the dimensionality and granularity of $T$. In~\cite{JLY26}, the parameter $m$ is fixed in advance, and the dimensionality of the linear sketch depends on the space complexity used at the fixed parameter $m$ (preventing us from increasing $m$ while keeping the matrix dimension and granularity fixed).}

\subsection{Technical Overview} \label{sec: tech-overview}

Our hard distribution $\calD$ in Theorem~\ref{thm:ln-sketch-lb} has frequency vectors $\bv$ whose support $\bS = \supp(\bv)$ will be fully contained in the hypercube $\hypercube$. Its structure is similar to that of \cite{KPSW25} in that $\bS$ is supported either on a tight ball of points, or on a tight ball and one far point. Any linear sketch which can solve $(r,c,n)$-diameter (Definition~\ref{def:r-c-n-diam}) must successfully distinguish whether $\bS$ has (Hamming) radius at most $d/c^2$ or at least $\Omega(d)$. In particular, consider generating $\bS$ by the following process:
\begin{enumerate}
\item Sample $\bx,\by \sim \{0,1\}^d$ uniformly and independently. Every point $z \in \calB(\bx,d/c^2)$ (i.e., whose Hamming distance from $\bx$ is at most $d/c^2$) is included in $\bS$ with some (arbitrary for now) multiplicity $\bv_z >0$.
\item With probability $1/2$, include $\by$ in $\bS$ with some (arbitrary for now) multiplicity $\bv_{\by}>0$, and with probability $1/2$, do not include $\by$ and set $\bv_{\by} = 0$.
\end{enumerate}
In the above instances, the diameter of $\bS$ is fully determined by whether or not the vector $\by$ is included in $\bS$. Moreover, there is multiplicative gap of $\Omega(c)$ in the Euclidean diameter between the two cases; note, the pointset $\bS$ has support size $n = \exp(\Theta(d\log c/c^2))$. Before discussing lower bounds, our starting point will be a natural algorithmic (upper bound) approach for these instances via linear sketching:
\begin{quote}
\textbf{Ball-Subspace-Evading Configurations.} Suppose we find an embedding $f \colon \{0,1\}^d \to \R^s$, such that with high probability over $\bx,\by \sim \{0,1\}^d$, the vector $f(\by)$ is not contained within $\Span_f(\calB(\bx,d/c^2))$, i.e., the span of vectors $f(z)$ for $z \in \calB(\bx, d/c^2)$.
\end{quote}
Then, $f$ leads to a linear sketch which maintains the $s$-dimensional vector $T\bv = \sum_{z \in \bS} \bv_z \cdot f(z)$ by letting $T$ have columns given by $f(z)$. Since $f(\by)$ is usually not in $\Span_f(\calB(\bx,d/c^2))$, the presence or absence of $f(\by)$ affects the subspace containing $\bv$. Perhaps surprisingly, the techniques of~\cite{KPSW25} allow us to show that any algorithmic approach gives rise to a linear sketch of the above form (see Lemma~\ref{lem:stream-to-sketch} and Lemma~\ref{lem:alg-to-embedding}). The primary technical contribution of this work is showing that any ball-subspace-evading configuration $f$ must map to a large dimension $s$.

\textbf{Comparison to~\cite{KPSW25}.} The recent work of~\cite{KPSW25} on streaming lower bounds for (non-Euclidean) diameter, consists of three steps. First, a streaming-to-sketching reduction for scale-invariant functions (Theorem~4 of~\cite{KPSW25}) which allows them to lower bound the dimensionality of linear sketches. Second, a distribution over inputs whose correctness on a low-dimensional linear sketch implies a low-rank factorization of a (random) zero-pattern matrix (Lemma~4.6 of~\cite{KPSW25}). Lastly, they apply a result of~\cite{ABGMM20} which shows that any factorization with the (random) zero-pattern must have high rank, a ``min-rank'' lower bound. In~\cite{KPSW25}, the random metric is constructed to exactly yield the random zero-pattern, which is crucial, since the proof of~\cite{ABGMM20} is via the probabilistic method; however, it leads to a non-Euclidean metric construction. In this work, we can use the first and second step of~\cite{KPSW25}, but cannot use the min-rank lower bound of~\cite{ABGMM20}. The novelty in this work is proving the dimensionality lower bound for ball-subspace-evading configurations, our notion analogous to min-rank. The new argument avoids the probabilistic method, which is important, since we prove lower bounds against a specific zero-pattern (derived from the hypercube instance).

\paragraph{Dimensionality of Ball-Subspace-Evading Configurations.} The remainder of the technical overview is devoted to answering that question: what is the smallest target dimension $s \in \N$ for which there exists a ball-subspace-evading configuration $f \colon \{0,1\}^d \to \R^s$. An initial (incorrect) instinct is to conjecture that unstructured embeddings, using dimensionality $s$ set to $|\calB(0, d/c^2)| + 1$, are best possible; a random embedding $f$ into these many dimensions ensures $\Span_f(\calB(\bx, d/c^2))$ is always $|\calB(0,d/c^2)|$-dimensional, and the final $f(\by)$ has one ``available direction'' in $\R^s$ left. However, Indyk's algorithm (which is a linear sketch) does give a very structured and much lower-dimensional embedding, with \smash{$s = n^{O(1/c^2)} = |\calB(0, d/c^2)|^{O(1/c^2)}$}. In Lemma~\ref{lem:dim-lb-embedding}, we show that any embedding where \smash{$f(\by) \notin \Span_f(\calB(\bx, \tau))$} with probability at least $1-\delta$ must have dimensionality \smash{$\exp(\Omega(\tau^2 / d))$}, and setting $\tau = d / c^2$ and substituting \smash{$n = \exp(O(d \log c / c^2))$} gives the desired lower bound on $s$.

The first step in the proof of Lemma~\ref{lem:dim-lb-embedding} is translating the above formulation into a rank lower bound on a large $2^n \times 2^n$ matrix with a specified zero-pattern. The assumption of $f$ is that $\Span_f(\calB(\bx, \tau))$ often does not contain $f(\by)$, so if we let $g \colon \{0,1\}^d \to \R^s$ be a random map where $g(x)$ is drawn orthogonal to $\Span_f(\calB(x, \tau))$ and consider the rank-$s$ matrix \smash{$A$} where $\smash{A}_{y,x} = \langle f(y), g(x)\rangle$, we have:
\begin{itemize}
    \item Every $x \in \{0,1\}^d$ and $z \in \calB(x,\tau)$ has $A_{x,z} = 0$ (by the way we generated $g(z)$); so the ``diagonal band'' of $A$ corresponding to pairs in $\{0,1\}^d$ within Hamming distance $\tau$ is always zero. 
    \item Whenever $f(y) \notin \Span_f(\calB(x, \tau))$, which occurs on at least $(1-\delta) \cdot 2^{2d}$ pairs, $g(x)$ will have some correlation with $f(y)$ almost surely, so $(1-\delta)$-fraction of entries in $A$ are non-zero.
\end{itemize}
It will be useful later in the argument to ensure (i) all entries are non-negative, and (ii) a constant fraction of columns contains at least $(1-2\delta) \cdot 2^d$ non-zeros. Condition (i) is enforced by tensoring which makes the rank $s^2$ (and loses only a factor $2$ in the exponent), and condition (ii) by applying Markov's inequality. 

\paragraph{Forster's Transform and Duality to Isolate Rank.} Lower bounds on rank of a non-negative $2^n \times 2^n$ matrix with conditions on sign- and zero-patterns arise in communication complexity, and in this work, we use Forster's approach for sign-rank lower bounds~\cite{F02}. Specifically, we will first apply a Forster transform in order to obtain a rank-$s^2$ non-negative matrix $B$ with entries in $[0,1]$ which has the same zero-pattern as $A$, but ensures $\| B \|_F^2 \geq \Omega((1-2\delta) 2^{2d} / s^2)$ (Lemma~\ref{lem:fro-lb}). This helps translate an assumption on the zero-pattern into one about substantial Frobenius norm. Note, such a matrix $B$ is useful for isolating the rank; if $M$ is an arbitrary $2^n\times 2^n$ matrix where $M_{x,y} = 1$ whenever $y \notin \calB(x, \tau)$, 
\[ \|B\|_F^2 \leq \sum_{x, y \in \{0,1\}^d} B_{x,y} = \langle B, M\rangle \leq \| B \|_{\Tr} \cdot \|M \|_{2} \leq s \| B\|_F \cdot \|M\|_2, \]
where the first inequality uses the fact $B \in [0, 1]$, the second that $M_{x,y} = 1$ for any non-zero $B_{x,y}$, and then, the duality of trace and operator norms, as well as the fact the trace norm is at most square-root rank times the Frobenius norm. Substituting the Frobenius norm lower bound, 
\[ s^2 \gsim \max\left\{ \dfrac{2^d}{\|M\|_2} : M_{x,y} = 1 \text{ whenever $y \notin \calB(x,\tau)$} \right\}. \]
Hence, it suffices to construct a matrix $M$ of smallest spectral norm which is $1$ on every entry outside the ``diagonal band'' of pairs within Hamming distance $\tau$. We remark that the proof of Lemma~\ref{lem:fro-lb} is not entirely a black-box application of Forster's transform. The existence of a Forster transform (i.e., an invertible linear map such that after rescaling, vectors are in isotropic position) which we will apply to the $g$-vectors requires a non-degeneracy condition on the set of vectors, where a single $k$-dimensional subspace cannot contain more than a $k/s$-fraction of the points. When the condition fails, we find a $k$-dimensional subspace containing more than $k/s$-fraction of columns and recurse. The process eventually terminates, and while the number of columns contributing to $\|B\|_F^2$ may decrease, the magnitude of their contributions increases proportionally.

\paragraph{Constructing the Matrix $M$.} We will construct the matrix $M$ by directly ensuring its eigenvectors are the characters $\{ \chi_S \in \{-1,1\}^{2^d} : S \subset [d] \}$ where $(\chi_S)_x = (-1)^{\langle \ind_S, x\rangle}$ which satisfy $M_{x,y} = 1$ outside the band. Specifically, we seek coefficients $c_0, \dots, c_d \in \R$ where
\[ M = 1_{2^d \times 2^d} + \sum_{\ell=0}^{d} c_{\ell} \sum_{|S|=\ell} \chi_{S} \chi_{S}^{\intercal}, \]
which directly sets the eigenvalues to $2^d(1-c_0)$ and $-2^d c_1, \dots,-2^d c_d$. Furthermore, the fact we sum over all sets $|S| = \ell$ ensures entries of $M$ only depend on the Hamming distance of the underlying inputs, i.e., $M_{x,y}$ only depends on $|x\oplus y|$, and the specific value can be read off from a linear function of the coefficients $c \in \R^{d+1}$. Specifically, we let $K$ be the $(d+1)\times (d+1)$ matrix where the $(j,\ell)$-entry is $\alpha_{\ell}(j) = \sum_{|S|=\ell} (\chi_S)_z$ with $|z| = j$,\footnote{Note, this is invariant to the specific choice of $z \in \{0,1\}^d$ and only depends on $|z|=j$.} then
\[ M_{x,y} = e_x^{\intercal} M e_y = 1 + (Kc)_{|x\oplus y|}. \]
The constraint that $M_{x,y} = 1$ when $y \notin \calB(x, \tau)$ is equivalent to requiring $(Kc)_{j} = 0$ for all $j \in \{\tau+1, \dots, d\}$. In other words, we may construct $M$ by finding a setting of $c \in \R^{d+1}$ where $(Kc)_{j} = 0$ for all $j \in \{\tau+1, \dots, d\}$ while maintaining an upper bound on
\begin{align} 
\| M \|_{2} = 2^d \cdot \max\left\{ |1-c_0|, |c_1|, \dots, |c_d| \right\}. \label{eq:spectral}
\end{align}
In Lemma~\ref{lem:ext-polynomial}, we relate the above linear program to a problem on extremal polynomials. In particular, there are three observations:
\begin{itemize}
    \item $K^2 = 2^d \cdot I$, which will allow us to write $p = Kc$ and instead of varying $c \in \R^{d+1}$, we will vary $p \in \R^{d+1}$ where $p_{\tau+1} = \dots = p_d = 0$.
    \item The $\ell$-th row of $K$ is the evaluation of a specific polynomial of degree-$\ell$ at inputs $j \in \{ 0, \dots, d\}$ (corresponding to the columns) which are linearly independent. 
    \item As we vary $p \in \R^{d+1}$ with $p_{\tau+1} = \dots = p_d = 0$, the vectors $Kp \in \R^{d+1}$ vary over all univariate polynomial evaluations of $(q(0), \dots, q(d))$ where $q$ is of degree at most $\tau$.
\end{itemize}
The question in (\ref{eq:spectral}) then becomes one of minimizing the value of $\max\{ |2^d - q(0)|, |q(1)|, \dots, |q(d)| \}$ over all univariate polynomials $q$ of degree at most $\tau$. We finish off the proof in Lemma~\ref{lem:cr}, by constructing a polynomial which ensures the spectral norm of $M$ is at most $\exp(\Omega(\tau^2 / d))$. The construction actually comes from a translated and rescaled polynomial underlying the $O(\sqrt{n \log(1/\eps)})$ approximate degree of the $\mathsf{OR}$ function~\cite{BCdWZ99,BT22}. Namely, consider the smallest possible $\eps > 0$ where $\tau \gsim \sqrt{d \log(1/\eps)}$, which gives us $\eps = \exp\left(-\Omega(\tau^2/d)\right)$; the univariate polynomial $f$ used for approximating $\mathsf{OR}$ must evaluate to $1\pm \eps$ on $i \in \{1, \dots, d\}$ and $-1\pm\eps$ on $0$. Letting $p(i) = -(f(i) - 1)/\eps$ gives the desired polynomial. 
\section{Preliminaries} \label{sec: prelims}

\newcommand{\freq}{\mathrm{freq}}

In the dynamic geometric streaming model of~\cite{I04}, a stream $\sigma$ consists of an arbitrary sequence of insertions and deletions of points, and implicitly maintains a multi-set of points present. The stream may $\textsc{Add}(p)$ (adds the point $p$ to the current set) or $\textsc{Remove}(p)$ (which removes $p$ from the current set) to update the initially empty multi-set. The inserted and deleted points $p$ lie in the discrete geometric space $\{0,\dots, \Delta\}^d$, so an equivalent representation of the underlying multi-set is a vector \smash{$v \in \Z^{\{0,\dots, \Delta\}^d}_{\geq 0}$} (where each coordinate corresponds to a point in $\{0,\dots, \Delta\}^d$) maintaining the frequency counts of each point.\footnote{Our upper bounds work in the less restrictive (general) turnstile model where frequency counts may be negative, and our lower bounds work in the (more restrictive) strict turnstile model.} We use the notation $v = \freq(\sigma)$ to denote the frequency vector of a stream $\sigma$. With this representation, the diameter of a frequency vector $v$ is
\[ \diam(v) := \max\left\{ \| x - y \|_2 : x, y \in \supp(v) \right\}, \]
where $\supp(v)$ consists of non-zero coordinates $x \in \{0,\dots,\Delta\}^d$, i.e., where $v_x \neq 0$.
A randomized streaming algorithm $\calA$ processes each update in a stream one-by-one while keeping a (compressed) internal state and access to randomness. At the end of the stream, the algorithm produces an output which is a function of the internal state and the randomness solely. 
\begin{definition}[$(r,c,n)$-diameter problem]\label{def:r-c-n-diam}
For $r > 0$, $c \geq 1$ and $n \in \N$, a randomized streaming algorithm $\calA$ solves the $(r, c, n)$-diameter problem if after processing any stream $\sigma$ with $v = \freq(\sigma)$,
\begin{itemize}
\item If $\diam(v) \leq r$ and $|\supp(v)|\leq n$, $\calA(\sigma)$ outputs $0$ with probability at least $1-\delta$,
\item If $\diam(v) \geq c r$ and $|\supp(v)| \leq n$, $\calA(\sigma)$ outputs $1$ with probability at least $1-\delta$,
\end{itemize} 
Here, $\delta > 0$ is a sufficiently small constant. We refer to $\diam_{r,c,n}(v) \in \{0,1,*\}$ as the partial function specifying the two cases.
\end{definition}
The primary focus of this paper is proving a dimensionality and space-complexity lower bounds on linear sketches and streaming algorithms which can solve the $(r, c, n)$-diameter problem. We rely on a streaming-to-sketching reduction which was first developed in~\cite{G08, LNW14}, and then specialized to scale-invariant problems (problems whose scaling of the frequency vectors do not affect the output), like $\diam(v)$. The reduction is a crucial ingredient of this work as well, which receives a randomized streaming algorithm $\calA$, as well as an input distribution $\calD$ over frequency vectors, and outputs a linear map which can be used to solve the $(r,c,n)$-diameter problem and whose dimensionality will depend on the space complexity of $\calA$; this allows us to reduce to a clean linear algebraic problem.

\begin{definition}\label{def:ln-sketch}
For $s, \lambda \in \N$, an $(s, \lambda)$-linear sketch is a matrix-algorithm pair $(T, \Dec)$, where $T$ is an $s \times N$ integer matrix with $\norm{T}_{\infty} \leq \lambda$ and $\Dec\colon \Z^s \to \{0,1\}$ is a function. On input frequency vector \smash{$v \in \Z_{\geq 0}^{\{ 0, \ldots, \Delta\}^d}$}, the sketch outputs $\Dec(Tv)$ as its output.
\end{definition}

\begin{definition}[Space Complexity]\label{def:space}
For $m \in \N$, we let $\calW^{+}(\calA, m)$ be the set of final internal states $\calA$ may hold over all streams $\sigma$ with $\| \freq(\sigma)\|_{\infty} \leq m$ and all settings of randomness.\footnote{We assume randomness is public, and the algorithm does not need to use internal states to store its randomness.} The space complexity of $\calA$ is the function $\calS^+(\calA, \cdot) \colon \N \to \N$ given by
\[ \calS^{+}(\calA, m) \eqdef \lceil \log_2 | \calW^+(\calA, m) | \rceil. \] 
\end{definition}
 

\section{Lower Bound} \label{sec: lower-bound}

In this section, we prove Theorem~\ref{thm:ln-sketch-lb}, which gives a lower bound on the dimensionality of linear sketches. Then, we will use the streaming-to-sketching reduction from~\cite{KPSW25} to obtain the corresponding lower bound on the bit-complexity of dynamic streaming algorithms.

\begin{restatable}{theorem}{LnSketchLb}\label{thm:ln-sketch-lb}
    For $n, \lambda \in \N$, any $c \geq 1$, and sufficiently small constant $\delta > 0$, there exists $r > 0$ and a distribution $\calD$ on frequency vectors $\bv$ such that $\diam_{r,c,n}(\bv)\in\{0,1\}$ with high probability, and every $(s,\lambda)$-linear sketch $(T, \Dec)$ which satisfies
    \begin{align*}
        \Prx_{\bv \sim \calD}\left[ \Dec(T\bv) = \diam_{r,c,n}(\bv)\right] \geq 1 - \delta
    \end{align*}
    must have $s = n^{\tilde{\Omega}(1 / c^2)}$.
\end{restatable}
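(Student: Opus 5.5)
We follow the pipeline laid out in the technical overview. The hard distribution $\calD$ lives on the hypercube $\{0,1\}^d \subset \{0,\dots,\Delta\}^d$. We set $\tau = d/c'^2$ for $c' = \Theta(c)$, chosen so that Hamming radius $\tau$ versus Hamming distance $\Omega(d)$ yields a Euclidean gap of factor $c$. Take $r = \sqrt{2\tau}$, so that the ball $\calB(\bx,\tau)$ has Euclidean diameter at most $r$, while a uniformly random $\by$ is at Hamming distance $\ge d/3$ from all of $\calB(\bx,\tau)$ with high probability, giving distance $\ge cr$. Choose $d$ so that $n = |\calB(0,\tau)|+1 = \exp(\Theta(d\log c/c^2))$.

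To sample $\bv$, draw $\bx,\by$ uniformly, put every $z\in\calB(\bx,\tau)$ in the support with a multiplicity, and include $\by$ with probability $1/2$. The multiplicities are taken independent and uniform in a range $\{1,\dots,m\}$ with $m$ large compared to $s\lambda$ and the dimension. Then $\diam_{r,c,n}(\bv)\in\{0,1\}$ with high probability.

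\textbf{Step 1 (sketch to evading embedding).} Given a correct $(s,\lambda)$-linear sketch $(T,\Dec)$, let $f(z)$ be the $z$-th column of $T$. We claim that for all but a $O(\delta)$ fraction of pairs $(\bx,\by)$, $f(\by)\notin \Span_f(\calB(\bx,\tau))$. Suppose instead that $f(\by)$ lies in this span for many pairs. Because the ball multiplicities are random over a huge range, the distribution of $T\bv$ with $\by$ included is nearly identical in total variation to that without $\by$. Concretely, adding a fixed integer combination of columns that lies in the lattice/span shifts a near-uniform distribution on a large box in the sublattice only slightly, since $m\gg s\lambda$. Then $\Dec$ cannot distinguish the two cases. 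This is the content of Lemma~\ref{lem:alg-to-embedding}, which we invoke. The result is a ball-subspace-evading configuration $f\colon\{0,1\}^d\to\R^s$ with failure probability $O(\delta)$.

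\textbf{Step 2 (dimension lower bound).} We apply Lemma~\ref{lem:dim-lb-embedding}, which gives $s \ge \exp(\Omega(\tau^2/d))$. Its ingredients are as follows.
\begin{enumerate}
\item Build the nonnegative rank-$s^2$ matrix $A_{y,x}=\langle f(y),g(x)\rangle^2$, whose zero band covers Hamming distance at most $\tau$ and which has $(1-O(\delta))$ nonzero density.
\item Apply the Forster-type normalization of Lemma~\ref{lem:fro-lb}, including its recursion on degenerate subspaces, to get $B\in[0,1]$ with $\|B\|_F^2\gtrsim 2^{2d}/s^2$.
\item Use the duality $\|B\|_F^2\le \langle B,M\rangle\le s\|B\|_F\|M\|_2$.
\item Construct $M$ diagonal in the character basis with the prescribed off-band entries, reducing via Lemma~\ref{lem:ext-polynomial} to a degree-$\tau$ univariate polynomial problem. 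Solve it with the rescaled $\mathsf{OR}$ approximating polynomial from Lemma~\ref{lem:cr}, giving $\|M\|_2\le 2^d\exp(-\Omega(\tau^2/d))$.
\end{enumerate}

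\textbf{Step 3 (parameters).} With $\tau=d/c'^2$ we get $\tau^2/d = d/c'^4$. Since $\log n = \Theta(d\log c/c^2)$, this gives $s\ge \exp(\Omega(\log n/(c^2\log c))) = n^{\tilde\Omega(1/c^2)}$.

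The main obstacle is Step 1: making rigorous that in-span implies indistinguishability for an integer sketch with an arbitrary decoder. Here we rely on the lattice/large-multiplicity argument inherited from~\cite{KPSW25}, together with careful conditioning so that the $O(\delta)$ error translates into a density of evading pairs usable in Step 2.
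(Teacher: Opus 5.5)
Your proposal is correct and follows the paper's proof: define $\calD$ with the parameters supplied by Lemma~\ref{lem:alg-to-embedding}, use the columns of $T$ as an $(O(\delta),\tau)$-ball-subspace-evading map, apply Lemma~\ref{lem:dim-lb-embedding}, and convert $\exp(\Omega(\tau^2/d)) = \exp(\Omega(d/c^4))$ into $n^{\Omega(1/(c^2\log c))}$ using $\log n = \Theta(d\log c/c^2)$. The one point your informal Step~1 leaves implicit is that $\by$ must get a fixed multiplicity $\mathsf{P}$ divisible by every determinant that can arise from columns of $T$ (the paper takes $\mathsf{P} = (O(s\lambda)^{2s})!$); this is needed because real-span membership alone does not put $\mathsf{P}\, T^{(\by)}$ in the integer lattice generated by the ball columns.
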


\paragraph{Description of the Hard Distribution.} The distribution is supported on the Boolean hypercube, so it will be convenient to use the Hamming distance. For $x,z\in\hypercube$, let $\Ham(x,z) := |\{i\in[d] : x_i\ne z_i\}|$ and $\calB(x,\tau) := \{z\in\hypercube : \Ham(x,z)\le \tau\}$.

\begin{definition}[Hard Distribution $\calD$]\label{def:hard}
For $n \in \N$, and $c \geq 1$, let $d = d(n,c) \in \N$ denote the largest positive integer where
\[ 1 + \sum_{j=0}^{d/(12 c^2)} \binom{d}{j} \leq n, \]
and let $\tau = d / (12 c^2)$ and $r = 2\sqrt{\tau} = \sqrt{d}/(c\sqrt{3})$. For fixed positive integers $\sfP, \sfU \geq 1$ to be set later, $\calD$ is the following distribution over frequency vectors $\bv$ whose coordinates correspond to points in $\{0,\dots,\Delta\}^d$:
\begin{itemize}
\item Draw $\bx, \by \sim \{0,1\}^{d}$, viewing $\{0,1\}^{d}$ as a subset of $\{0, \dots, \Delta\}^d$. 
\item Set $\bv_{\by} = \sfP$ with probability $1/2$ and $\bv_{\by} = 0$ otherwise. 
\item For each $z\in \calB(\bx, \tau) \subset \{0,1\}^{d}$, draw $\bv_z$ uniformly at random from $\{1, \dots, \sfP \sfU\}$, and set all remaining coordinates of $\bv$ to $0$.
\end{itemize}
\end{definition}

Notice that Definition~\ref{def:hard} is designed such that $\supp(\bv)$ always has size at most $n$ (by the setting of $d$). Furthermore, $\diam(\bv) \le r$ whenever $\bv_{\by}=0$ (since $\supp(\bv)\subseteq\calB(\bx,\tau)$), while $\diam(\bv)\ge \sqrt{d/3} \ge cr$ whenever $\bv_{\by} > 0$ (since $\Ham(\bx,\by)\ge d/3$, with probability $1-o(1)$ over uniform $\bx,\by$). Indeed, this immediately implies the first property of Theorem~\ref{thm:ln-sketch-lb}, that $\diam_{r,c,n}(\bv)\in\{0,1\}$ with probability $1-o(1)$.

\paragraph{Ball-Subspace-Evading Functions.} We now show that correctness of the linear sketch $(T,\Dec)$ on the specific distribution $\calD$ leads to a structural constraint on $T$. Namely, by specifying the values of $\sfP$ and $\sfU$ in Definition~\ref{def:hard}, we force the columns of the matrix $T$ to yield an $s$-dimensional embedding that satisfies a property we call ball-subspace-evasion. The second part of the argument is a proof that any ball-subspace-evading embedding must have high dimension.

\begin{definition}\label{def:bse}
A function $f \colon \{0,1\}^{d} \to \R^s$ is $(\delta,\tau)$-ball-subspace-evading if 
\begin{align*}
\Prx_{\bx,\by\sim\{0,1\}^{d}}\left[ f(\by) \notin \Span_f(\calB(\bx,\tau))\right] \geq 1 - \delta,
\end{align*}
where $\Span_f(\calB(\bx,\tau)) \subseteq \R^s$ denotes the subspace spanned by the vectors $\{ f(z) : z \in \calB(\bx,\tau)\}$.
\end{definition}

\begin{restatable}{lemma}{AlgToEmbedding}\label{lem:alg-to-embedding}
For any $\lambda\in\N$, there is a setting of $\sfP,\sfU\ge 1$ such that if $(T,\Dec)$ is an $(s,\lambda)$-linear sketch such that \begin{align}
    \Prx_{\bv\sim \calD}\left[ \Dec(T\bv) \neq \diam_{r,c,n}(\bv) \right] \leq \delta. \label{eq:comp}
\end{align}
    then one can construct an $(O(\delta),\tau)$-ball-subspace-evading function $f : \{0,1\}^d\to\R^s$.
\end{restatable}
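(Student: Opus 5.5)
We take $f(z) := T e_z \in \Z^s$, the column of $T$ indexed by $z \in \{0,1\}^d$, and argue by contrapositive. Suppose the set $\mathcal{E}$ of pairs $(x,y)$ with $f(y) \in \Span_f(\calB(x,\tau))$ has probability more than $C\delta$ for a large constant $C$. On such a pair we compare the two cases $\bv_{\by}=0$ and $\bv_{\by}=\sfP$. Correctness requires $\Dec$ to output $0$ in the first case and $1$ in the second (the second only when $\Ham(\bx,\by)\ge d/3$, which fails with probability $o(1)$). We will show that, conditioned on $(\bx,\by)\in\mathcal{E}$, the sketch value $T\bv$ has nearly the same distribution in both cases. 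Hence $\Dec$ errs with probability about $1/2$ on a $C\delta$-mass event, which contradicts \eqref{eq:comp} once $C$ is large.

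The key step is the coupling. Fix $(x,y)\in\mathcal{E}$ and let $w$ be the random vector supported on $\calB(x,\tau)$ with i.i.d. entries uniform in $\{1,\dots,\sfP\sfU\}$. The first case gives $Tw$ and the second gives $Tw+\sfP f(y)$. Since $f(y)\in\Span_\R\{f(z): z\in\calB(x,\tau)\}$ and all entries are integers, Cramer's rule lets us write $f(y)=\sum_z (a_z/D) f(z)$. Here $a_z, D$ are integers with $|a_z|,|D|\le L$, and $L$ depends only on $s,\lambda$, e.g.\ $L\le (s\lambda)^{s}s!$ via $s\times s$ minors of a basis among the $f(z)$. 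Taking $\sfP$ to be a common multiple of all possible $D$ (say $\sfP=L!$) makes $\sfP f(y)=T u$ with $u$ an integer vector supported on $\calB(x,\tau)$ and $\|u\|_\infty\le \sfP L$. Then $Tw+\sfP f(y)=T(w+u)$. Moreover $w+u$ is statistically close to $w$: coordinatewise, shifting a uniform variable on $\{1,\dots,\sfP\sfU\}$ by at most $\sfP L$ moves it by total variation at most $L/\sfU$. The crucial point is that only the at most $s$ coordinates of a basis need to be nonzero in $u$, since we express $f(y)$ using a basis of the span. So the total variation between $w$ and $w+u$ is at most $sL/\sfU$, which is $\le\delta$ once $\sfU\ge sL/\delta$.

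One wrinkle is that $s$ is not fixed in advance, while $\sfP,\sfU$ may depend only on $\lambda$. We handle this by noting $s\le N$ is bounded, or by letting $\sfP,\sfU$ depend on $\lambda$ and the trivial bound $s\le |\calB(x,\tau)|+1\le n$. A sketch with more rows than $n$ already satisfies the conclusion in spirit, since we may restrict $T$ to a row basis of its column space restricted to $\{0,1\}^d$ (dimension $\le 2^d$). Since $d$ is determined by $n,c$, setting $L$ via $s\le 2^d$ is legitimate.

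Finally, we assemble the bound. Let $E_0,E_1$ be the error probabilities in the two cases conditioned on $\mathcal{E}$. Their sum is at least $1-sL/\sfU-o(1)\ge 1/2$, so the overall error is at least $\tfrac12\cdot\Pr[\mathcal{E}]\cdot\tfrac12$. This forces $\Pr[\mathcal{E}]\le 4\delta$, i.e.\ $f$ is $(4\delta,\tau)$-ball-subspace-evading. The main obstacle is the integrality/bounded-coefficient step: we must ensure that a single choice of $\sfP$ clears all denominators and that $u$ has small support and magnitude relative to $\sfP\sfU$. The Cramer's-rule bound in terms of $\lambda$ and the ambient dimension is what I would rely on.
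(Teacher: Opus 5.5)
Your proof is correct. It rests on the same two ingredients as the paper. First, $f(z)=Te_z$, and Cramer's rule with $\sfP$ a factorial turns span membership into an integer vector $u$ on $\calB(x,\tau)$ with $Tu=\sfP f(y)$ and small entries relative to $\sfP\sfU$. Second, a huge $\sfU$ makes that shift invisible.

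The organization is genuinely different. The paper proves two separate claims:
\begin{itemize}
\item For a $(1-4\delta)$-fraction of pairs there is no short kernel vector $z$ supported on $\calB(x,\tau)\cup\{y\}$ with $z_y=\sfP$. It restricts to ``$x$-good'' vectors whose ball entries have margin $\sfQ$, so that $v$ and $v+z^{x,y}$ both lie in the support of $\calD$ and collide exactly under $T$. It then sums the error over the disjoint blocks $\{v\}\cup\{v+z^{x,y}:y\in Y_x\}$.
\item Span membership produces such a kernel vector.
\end{itemize}
You fuse these into a single total-variation coupling between the $\bv_{\by}=0$ and $\bv_{\by}=\sfP$ branches for each bad pair.

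What each route buys:
\begin{itemize}
\item Your route avoids the margin and block bookkeeping. It only needs $\sfU \gg kL$, where $k\le\min(s,n-1)$ bounds the support of $u$. The paper's requirement $(1-2\sfQ/(\sfP\sfU))^{n-1}\ge 0.99$ pays a factor $n-1$, since its witness may have full support on the ball; this is immaterial for the final bound.
\item The paper's route yields an exact, deterministic ``no short kernel vector'' statement, mirroring~\cite{KPSW25}.
\end{itemize}

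Your remark on the $s$-dependence is a real improvement. The paper's $\sfP,\sfU,\sfQ$ depend on $s$, which sits awkwardly with the lemma's quantifier order. Phrase the fix precisely, though. It is not true that $s\le|\calB(x,\tau)|+1$. What is bounded is the rank $k$ of $\{f(z): z\in\calB(x,\tau)\}$, which is at most $n-1$. The Cramer bound $L$ and the support of $u$ depend only on $k$ and $\lambda$, so $\sfP$ and $\sfU$ can be chosen as functions of $n$ and $\lambda$ alone. No row-restriction of $T$ is needed.
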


We note that Lemma~\ref{lem:alg-to-embedding} has an analogous step appearing in Subsection~4.1 of~\cite{KPSW25}. In that work, the linear sketch $(T, \Dec)$ yields a solution to a minrank problem (which is then shown to require large dimensionality). In this work, we use $T$ to construct ball-subspace-evading functions (which, while similar in spirit to a specific minrank problem, is not exactly the same). Nevertheless, the proof of Lemma~\ref{lem:alg-to-embedding} is entirely analogous to~\cite{KPSW25} and is thus deferred to Appendix~\ref{sec: appendix}.

\paragraph{Ball-Subspace-Evasion Requires Large Dimension.} The main technical novelty of this work is showing that ball-subspace-evading functions $f$ require large dimensionality. This proof is the technical bulk of the paper and is presented in Section~\ref{sec:bse-lb}. After that is established, we can conclude Theorem~\ref{thm:ln-sketch-lb} fairly straightforwardly.

\begin{lemma}[Main Lemma, deferred to Section~\ref{sec:bse-lb}]\label{lem:dim-lb-embedding}
If $f \colon \{0,1\}^d \to \R^s$ is $(\delta, \tau)$-ball-subspace-evading, then $s = \exp\left(\Omega(\tau^2/d)\right)$. 
\end{lemma}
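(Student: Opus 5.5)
The proof follows the route sketched in the technical overview. The first step is to convert ball-subspace-evasion into a zero-pattern matrix of low rank. For each $x \in \{0,1\}^d$, I draw $g(x)$ as a Gaussian vector in the orthogonal complement of $\Span_f(\calB(x,\tau))$. I then set $A_{y,x} = \langle f(y), g(x)\rangle^2$, which can also be written as $\langle f(y)^{\otimes 2}, g(x)^{\otimes 2}\rangle$. This matrix is non-negative and has rank at most $s^2$. It satisfies $A_{y,x}=0$ whenever $y \in \calB(x,\tau)$. Whenever $f(y)\notin\Span_f(\calB(x,\tau))$, the projection of $f(y)$ onto the complement is nonzero, so $A_{y,x}>0$ almost surely. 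Hence at least a $(1-\delta)$-fraction of the entries are positive. By Markov's inequality, at least half of the columns $x$ have at most $2\delta 2^d$ zeros, and I restrict attention to these columns.

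The second step normalizes $A$ using Forster's transform. I want a matrix $B$ with entries in $[0,1]$, the same zero pattern as $A$ (on the retained columns), rank at most $s^2$, and $\|B\|_F^2 \gtrsim (1-2\delta)2^{2d}/s^2$. The plan is as follows. Apply a Forster transform to the column vectors $g(x)^{\otimes 2}$ and to the row vectors $f(y)^{\otimes 2}$, so that the column vectors, after normalization to unit length, are in radial isotropic position in a space of dimension $k \le s^2$. Then rescale the rows to unit length. Since $\sum_x u_xu_x^\top = (m/k) I$ for the $m$ columns, each unit row $w$ satisfies $\sum_x \langle w,u_x\rangle^2 = m/k$. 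The entries of $B$ are $\langle w_y,u_x\rangle$, which I take non-negative by using the tensored (squared) form, and they are at most $1$ by Cauchy--Schwarz.

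I expect the main obstacle to be the non-degeneracy condition needed for the Forster transform to exist. If some $j$-dimensional subspace contains more than a $j/k$-fraction of the column vectors, I restrict to those columns and recurse inside that subspace. The ratio (number of columns)/(dimension) only increases under this restriction, so the Frobenius bound $\sum_{y}\sum_x \langle w_y,u_x\rangle^2 = 2^d\cdot m/k \ge 2^{2d}/(2s^2)$ is preserved up to constants. The recursion terminates because the dimension strictly decreases.

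The third step is duality. Let $M$ be any matrix with $M_{y,x}=1$ for all $y \notin \calB(x,\tau)$. Then
\[
\|B\|_F^2 \le \sum B_{y,x} = \langle B,M\rangle \le \|B\|_{\Tr}\|M\|_2 \le s\|B\|_F\|M\|_2 .
\]
Here the first inequality uses $B\in[0,1]$, the middle equality uses that $B$ vanishes on the band where $M$ is unconstrained, and the last uses $\|B\|_{\Tr}\le\sqrt{\operatorname{rank}B}\,\|B\|_F$ with $\operatorname{rank} B\le s^2$. Combining this with the Frobenius lower bound gives $s^2 \gtrsim 2^d/\|M\|_2$.

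The final step constructs $M$ with small spectral norm. I take $M = J + \sum_\ell c_\ell \sum_{|S|=\ell}\chi_S\chi_S^\top$, where $J$ is the all-ones matrix. The characters $\chi_S$ are eigenvectors of $M$, and $\|M\|_2 = 2^d\max(|1-c_0|,|c_1|,\dots,|c_d|)$. The entries of $M$ depend only on $|x\oplus y|$ through the Krawtchouk matrix $K$. Using $K^2=2^dI$, the constraints become: choose a univariate polynomial $q$ of degree at most $\tau$ and minimize $\max(|2^d-q(0)|,|q(1)|,\dots,|q(d)|)$. I take $q = 2^d\cdot$(a rescaled OR-approximating polynomial), built from a shifted Chebyshev polynomial $T_\tau$ normalized so that $q(0)=2^d$ and $|q(i)|\le 2^d\eps$ on $\{1,\dots,d\}$, with $\eps = \exp(-\Omega(\tau^2/d))$. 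The standard estimate $T_\tau(1+2/d)\ge \exp(\Omega(\tau/\sqrt d))$, amplified via the $\sqrt{d\log(1/\eps)}$ construction, gives this $\eps$. Then $\|M\|_2\le 2^d\eps$, and therefore $s^2 \gtrsim 1/\eps = \exp(\Omega(\tau^2/d))$.
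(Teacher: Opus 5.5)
Your proposal is correct and follows the paper's proof essentially step for step. The steps match: the tensored zero-pattern matrix with the Markov restriction to good columns, the Forster transform with the recursive repair of the non-degeneracy condition, the trace-norm/operator-norm duality, and the Krawtchouk reduction to a degree-$\tau$ polynomial built from the tight $O(\sqrt{d\log(1/\eps)})$ approximate-degree construction for $\mathsf{OR}$.

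The one step to tighten is the Forster normalization. The row vectors $f(y)^{\otimes 2}$ need not lie in the span $V'$ of the retained columns, so the identity $\sum_x \langle w_y,u_x\rangle^2 = (\text{columns})/(\text{dimension})$ only holds after you project the rows onto $V'$. This projection leaves all inner products with retained columns unchanged, and the paper's $P_\alpha$, $\alpha\to 0$ device effectively does it. After projecting, only the at least $(1-2\delta)2^d$ rows with nonzero projection contribute, not all $2^d$. Also, the non-strict subspace condition at termination calls for $\eps$-approximate rather than exact radial isotropy. Neither change affects the bound $\|B\|_F^2 \gtrsim (1-2\delta)4^d/s^2$.
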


\begin{proof}[Proof of Theorem~\ref{thm:ln-sketch-lb}]
We may now combine Lemma~\ref{lem:alg-to-embedding} and Lemma~\ref{lem:dim-lb-embedding} to conclude Theorem~\ref{thm:ln-sketch-lb}. First, we fix the sketch magnitude parameter $\lambda\in\N$, which gives a setting of parameters $\sfP,\sfU$ from Lemma~\ref{lem:alg-to-embedding} and in turn specifies the hard distribution $\calD$ from Definition~\ref{def:hard}. By Lemma~\ref{lem:alg-to-embedding} on a $(s,\lambda)$-linear sketch $(T,\Dec)$ with success probability $1-\delta$ on $\calD$ gives an $(O(\delta),\tau)$-ball-subspace-evading function $f \colon \{0,1\}^{d} \to \R^s$. Finally, Lemma~\ref{lem:dim-lb-embedding} implies 
\[ s = 2^{\Omega(\tau^2 / d)} = 2^{\Omega(d / c^4)} = n^{\Omega(1/ (c^2 \log c))}, \]
since $n = 2^{O(d \log c /c^2)}$.
\end{proof}


\subsection{Proof of Lemma~\ref{lem:dim-lb-embedding}}\label{sec:bse-lb}

In this section, we prove our main lemma which lower bounds the dimensionality of $(\delta,\tau)$-ball-subspace-evading configurations. First, given $f \colon \{0,1\}^d \to \R^s$, we let $g \colon \{0,1\}^d \to \R^s$ denote the random function where for every $x \in \{0,1\}^d$, if $f(\by) \in \Span_f(\calB(x,\tau))$ with probability at least $2\delta$ over $\by \sim \{0,1\}^d$, we set $g(x) = 0 \in \R^s$; otherwise, let $g(x)$ be a random unit vector drawn from the space orthogonal to $\Span_f(\calB(x, \tau))$.\footnote{Even though the function $g$ and (subsequently) the matrix $A$ will be a random, our proof will proceed by immediately considering any fixed $g$ and $A$, so we keep unbolded.} Consider the $2^d \times 2^d$ matrix $A$ whose rows are parametrized by points $x, y \in \{0,1\}^d$, and we let
\[ A_{y,x} = \langle f(y)^{\otimes 2}, g(x)^{\otimes 2} \rangle,  \]
and a consequence of $(\delta, \tau)$-subspace-evading configurations gives the following holds almost surely over the draw of $g$:
\begin{itemize}
\item For every $x \in \{0,1\}^d$ and every $z \in \calB(x, \tau)$, the entry $A_{x,z} = 0$. This holds because every $g(z)$ is orthogonal to $\Span_f(\calB(z,\tau)) \ni f(x)$.
\item $A$ has rank $s^2$, because the tensored vectors $f(y)^{\otimes 2}, g(x)^{\otimes 2} \in \R^{s^2}$.
\item At most half of the columns are all zero. For the non-zero columns $x \in \{0,1\}^d$, there are at least $(1-2\delta)$-fraction of $\by \sim \{0,1\}^d$ where $A_{\by,x} > 0$.
\end{itemize}
For the third item, consider the event over the draw of $\bx \sim \{0,1\}^d$ that the probability that $f(\by) \notin \Span_f(\calB(\bx,\tau))$ is at least $1-2\delta$. First, this event occurs with probability at least $1/2$ over $\bx \sim \{0,1\}^d$ by Markov's inequality, so at most half of the columns have $g(x) = 0$. Consider a non-zero column $x \in \{0,1\}^d$ where the event does occur, and note that $\Span_f(\calB(x,\tau)) \neq \R^s$ because $f(y) \notin \Span_f(\calB(x,\tau))$. In these columns, $g(x)$ is a random unit vector orthogonal to $\Span_f(\calB(x,\tau))$, and hence will have non-zero correlation with any vector $f(y) \notin \Span_f(\calB(x,\tau))$ almost surely. In other words, at least $(1-2\delta)$-fraction of rows $\by \sim \{0,1\}^d$ will have $\langle f(\by), g(x)\rangle > 0$ almost surely, and therefore, $A_{\by,\bx} = \langle f(\by), g(\bx)\rangle^2 > 0$. 

One challenge with lower bounding the rank of $A$ (which implies the lower bound on $s^2$) is that the assumption that $A_{y,x} > 0$ on entries $x, y$ has no margin (and could be arbitrarily close to zero). The first step of our argument involves using a Forster transform to create an average margin, where we will the Forster transform as in Lemma~4.19 of~\cite{HKLM20}.
\begin{lemma}\label{lem:fro-lb}
There exists a matrix $B \in [0,1]^{2^d \times 2^{d}}$ of rank $s^2$ satisfying $B_{x,z} = 0$ for every $z \in \calB(x,\tau)$, and $\| B \|_F^2 \geq \Omega((1-2\delta) 4^d / s^2)$.
\end{lemma}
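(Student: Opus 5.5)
We will apply a Forster transform to the column vectors $u_x := g(x)^{\otimes 2} \in \R^{s^2}$ while leaving the row vectors $w_y := f(y)^{\otimes 2}$ as the other factor, following Lemma~4.19 of~\cite{HKLM20}. Let $X \subseteq \{0,1\}^d$ be the set of non-zero columns, so $|X| \geq 2^d/2$. For each $x \in X$, at least a $(1-2\delta)$-fraction of $y$ have $\langle w_y, u_x\rangle > 0$. Every $z \in \calB(x,\tau)$ has $\langle w_x, u_z \rangle = 0$.

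Suppose first that the vectors $\{u_x\}_{x\in X}$ are in general position, meaning no $k$-dimensional subspace contains more than a $k/s^2$ fraction of them. Then Forster's theorem gives an invertible $L$ such that the normalized vectors $\hat u_x = L u_x/\|Lu_x\|$ satisfy $\sum_{x\in X} \hat u_x \hat u_x^{\intercal} = (|X|/s^2) I$. We also normalize the rows, $\hat w_y = L^{-\intercal} w_y / \|L^{-\intercal} w_y\|$. Since $\langle \hat w_y, \hat u_x\rangle$ is a positive rescaling of $A_{y,x}$, the new entries have the same zero pattern, are non-negative, and lie in $[0,1]$. Set $B_{y,x} = \langle \hat w_y, \hat u_x\rangle$, and for $x \notin X$ set the column to zero. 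Then $B$ has rank at most $s^2$. For each $y$,
\[ \sum_{x\in X} B_{y,x}^2 = \hat w_y^{\intercal}\Big(\sum_x \hat u_x\hat u_x^{\intercal}\Big)\hat w_y = |X|/s^2. \]
Summing over $y$ gives $\|B\|_F^2 = 2^d |X|/s^2 \geq 4^d/(2s^2)$. This argument never uses the $(1-2\delta)$ density, but that is harmless here. The density is needed in the degenerate case, and I would write the bound with the factor $(1-2\delta)$ as stated.

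\textbf{Degenerate case (main obstacle).} If some $k$-dimensional subspace $V$ with $k < s^2$ contains more than a $k/s^2$ fraction of $\{u_x\}_{x\in X}$, we restrict to those columns $X' = \{x : u_x \in V\}$ and the subspace $V$, and recurse. Projecting the rows $w_y$ onto $V$ leaves $\langle w_y, u_x\rangle$ unchanged for $x\in X'$, so the zero pattern and positivity are preserved. Choosing a minimal such $V$ and iterating, the process terminates at a subspace $V^*$ of dimension $k^*$ with column set $X^*$ in general position inside $V^*$. The invariant is $|X^*|/k^* \geq |X|/s^2$, because the ratio (number of columns)/(dimension) only increases at each step.

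Applying Forster inside $V^*$ gives $\sum_{x\in X^*}\langle \hat w_y,\hat u_x\rangle^2 = |X^*|/k^*$ for every $y$ whose projection onto $V^*$ is non-zero. Here the density enters: such rows must be counted. For a fixed $x \in X^*$, every $y$ with $A_{y,x}>0$ has non-zero projection onto $V^*$, so at least $(1-2\delta)2^d$ rows are non-zero. The remaining rows are set to zero. Hence
\[ \|B\|_F^2 \geq (1-2\delta) 2^d \cdot |X^*|/k^* \geq (1-2\delta)2^d|X|/s^2 = \Omega\big((1-2\delta)4^d/s^2\big). \]
The rank is at most $k^* \leq s^2$, and entries remain in $[0,1]$ after row and column normalization. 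The delicate points are checking that general position holds at termination, which means taking the largest violating ratio or a minimal violating subspace, and that zero columns and rows never create spurious non-zeros.
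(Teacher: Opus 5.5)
Your proposal follows the paper's proof. It applies a radial-isotropic (Forster) transform to the column vectors $g(x)^{\otimes 2}$, and the inverse transform plus normalization to the rows $f(y)^{\otimes 2}$. It uses the same iterative restriction to an overfull subspace, with the invariant $|X^*|/k^* \geq |X|/s^2$. It then uses the $(1-2\delta)$-density of one column of $X^*$ to show that at least $(1-2\delta)2^d$ rows have non-zero projection onto $V^*$.

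Where you differ, it is a legitimate simplification. You project the rows onto $V^*$, zero every other column, and use $L^{-\intercal}$ directly. This replaces the paper's $P_{\alpha} = R\Pi + \alpha^{-1}(I-\Pi)$, $\alpha \to 0$ device and its symmetrization of $R$. It is valid because the lemma, and its later use against $M$, only need zeros on the band, entries in $[0,1]$, and rank at most $s^2$.

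One correction is needed. Under your non-strict hypothesis (no $k$-dimensional subspace holds \emph{more} than a $k/\dim$ fraction), an exact isotropic position need not exist. For example, $e_1, e_1, e_2, e_1+e_2 \in \R^2$ satisfies the hypothesis, but isotropy would force the images of $e_2$ and $e_1+e_2$ to both be orthogonal to the image of $e_1$, making them parallel. There are two fixes:
\begin{itemize}
\item use the $\eps$-approximate version (Lemma~4.19 of~\cite{HKLM20}, as the paper does), which costs only a harmless $(1-\eps)$ factor; or
\item trigger the recursion whenever a proper subspace holds \emph{at least} a $k/\dim$ fraction, so that at termination every proper subspace is strictly underfull.
\end{itemize}

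A smaller point: your non-degenerate equality $\|B\|_F^2 = 2^d|X|/s^2$ tacitly assumes $f(y) \neq 0$ for every $y$. Your degenerate-case row count already handles this correctly.
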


\begin{proof}
Consider the subset of the half of columns $\calV \subset \{0,1\}^d$ (i.e., of size at least $2^d/2$), containing at least $2^d(1-2\delta)$ non-zeros in $A$. We will slightly overload notation by letting $V = \Span_{g}(\calV)$ to denote the span of vectors $\{ g(x)^{\otimes 2} : x \in \calV \}$, and we let $m \leq s^2$ be the dimensionality of $V$. The first step is identifying a subset $\calV' \subset \calV$ spanning a subspace $V'= \Span_{g}(\calV')$ of dimension $m' = \dim(V')$ satisfying the following two conditions (where the first is the assumption that will allows us to put the vectors corresponding to columns in $\calV'$ into radial isotropic position): 
\begin{enumerate}
\item\label{en:prop} Every $1 \leq k < m'$ and every subspace $\tilde{V} \subset V'$ of dimension $k$ has $|\{ x \in \calV' : g(x)^{\otimes 2} \in \tilde{V}| \leq (k/m') \cdot |\calV'|$.
\item\label{en:size} We have $|\calV'| \geq (m' / m) |\calV|$. 
\end{enumerate}
The fact such a subset exists follows an iterative argument. We let $\calV_0 = \calV$, $V_0 = V$ and $m_0 = m$ and will define a sequence of subsets parameterized by $\ell$. We maintain the invariant that $|\calV_{\ell}| \geq (m_{\ell} / m) |\calV|$ (which trivially holds for $\ell = 0$). Fix some $\ell$; if Condition~\ref{en:prop} fails to hold, we find $k$ between $1$ and $m_{\ell}-1$ and a subset $\tilde{V} \subset V_{\ell}$ where $|\{x \in \calV_{\ell} : g(x)^{\otimes 2} \in  \tilde{V}\}| > (k/m_{\ell}) |\calV_{\ell}|$, then we update $\calV_{\ell+1} = \{ x \in \calV_{\ell} : g(x)^{\otimes 2} \in \tilde{V}\}$, $V_{\ell+1} = \Span_g(\calV_{\ell+1})$, and $m_{\ell+1} = \dim(V_{\ell+1}) \leq k < m_{\ell}$, which now satisfies 
\[ |\calV_{\ell+1}| > (k/m_{\ell}) |\calV_{\ell}| \geq (k/m_{\ell}) \cdot (m_{\ell} / m) \cdot |\calV| \geq k/m \cdot |\calV| \geq m_{\ell+1} / m \cdot |\calV|. \]
Otherwise, if Condition~\ref{en:prop} is satisfied, we terminate. Since we always decrease the dimension $m_{\ell}$, the process must terminate at some $\ell$ and we set $\calV' = \calV_{\ell}$. Condition~\ref{en:size} follows from the invariant, and Condition~\ref{en:prop} holds when the process terminates.

We apply Lemma~4.19 of~\cite{HKLM20} to put vectors corresponding to columns from $\calV'$ into radial isotropic position. In other words, for any $\eps > 0$, we find an invertible linear map $R \colon V' \to V'$ which places $\calV'$ into $\eps$-isotropic position, that is, for any unit vector $u \in V'$, 
\begin{align} 
\frac{1-\eps}{m'} \leq \frac{1}{|\calV'|} \sum_{x \in \calV'} \dfrac{\langle u, Rg(x)^{\otimes 2} \rangle^2}{\| Rg(x)^{\otimes 2}\|_2^2} \leq \frac{1+\eps}{m'}.  \label{eq:eps-iso}
\end{align}
We may assume without loss of generality that $R$ is symmetric; if not, write $R = U \Lambda W^{\intercal}$ where columns of $U$ and $W$ are orthonormal bases of $V'$, and let $R' = W \Lambda W^{\intercal}$, which also satisfies (\ref{eq:eps-iso}) by writing $u = U W^{\intercal} w$ and varying over all unit vectors $w \in V'$.

For any $\alpha > 0$, we will define a symmetric invertible linear transformations $P_{\alpha} \colon \R^s \to \R^s$ to accomplish the following task:
\begin{itemize}
\item We will apply $P_{\alpha}$ to $g(x)^{\otimes 2}$ and re-normalize, and apply $P_{\alpha}^{-1}$ to $f(y)^{\otimes 2}$ and re-normalize. The fact that these are symmetric and inverses will preserve orthogonality, and the re-normalization will enforce inner products are in $[0,1]$.
\item On the subspace $V'$, the map $P_{\alpha}$ simply applies $R$. This let's us conclude that the vectors $g(x)^{\otimes 2}$ for $x \in \calV'$ will satisfy (\ref{eq:eps-iso}). Outside the subspace $V'$, the map $P_{\alpha}$ ``blows up'' vectors significantly, which has the effect that the inverse $P_{\alpha}^{-1}$ ``kills'' components of vectors outside $V'$, and effectively acts as an invertible projection onto $V'$.
\end{itemize}
Let $\Pi \colon \R^{s^2} \to V'$ and let $P_{\alpha}\colon \R^{s^2} \to \R^{s^2}$ be given by $R \Pi + (1/\alpha) \cdot (I - \Pi)$ which is also invertible and symmetric. Consider the collection of vectors $\{u_y^{\alpha}\}, \{ v_x^{\alpha} \}$ which will define the $2^d \times 2^d$ matrix $B^{\alpha}$ by letting $B_{y,x}^{\alpha} = \langle u_y^{\alpha}, v_x^{\alpha}\rangle$ and are given by
\begin{align*}
u_y^{\alpha} = \dfrac{P_{\alpha}^{-1} f(y)^{\otimes 2}}{\| P_{\alpha}^{-1} f(y)^{\otimes 2}\|_2} \in \R^{s^2} \qquad \text{and}\qquad v_x^{\alpha} = \dfrac{P_{\alpha} g(x)^{\otimes 2}}{\| P_{\alpha} g(x)^{\otimes 2}\|_2} \in \R^{s^2}
\end{align*}
Notice that $B^{\alpha}$ is an entry-wise positive re-scaling of $A$ which does not increase the rank. This is because 
\[ B_{y,x}^{\alpha} = \langle u_y^{\alpha}, v_x^{\alpha}\rangle = \dfrac{\langle P_{\alpha}^{-1} f(y)^{\otimes 2} , P_{\alpha} g(x)^{\otimes 2} \rangle}{\|P_{\alpha}^{-1} f(y)^{\otimes 2}\|_2 \cdot \| P_{\alpha} g(x)^{\otimes 2}\|_2} = \dfrac{\langle f(y)^{\otimes 2}, P_{\alpha}^{-1} P_{\alpha} g(x)^{\otimes 2}\rangle}{\|P_{\alpha}^{-\intercal} f(y)^{\otimes 2}\|_2 \cdot \| P_{\alpha} g(x)^{\otimes 2}\|_2}, \]
by symmetry of $P_{\alpha}^{-1}$, so the right-most numerator is exactly $A_{y,x}$ and the denominator is a positive rescaling. The fact that every entry $B_{y,x}^{\alpha} \in [0,1]$ is because it is an inner product of two unit vectors. 

It remains to show there is a choice of $\alpha$ which ensures $\| B^{\alpha}\|_F^2$ is large, and we will then let $B = B^{\alpha}$ for that choice of $\alpha$. For any fixed $y \in \{0,1\}^d$, we lower bound the norm of the $y$-th row using vectors in $\calV'$ (where $P$ acts exactly like $R$ since $g(x)^{\otimes 2} \in V'$ for $x \in \calV'$),
\begin{align*}
\| B_y^{\alpha} \|_2^2 \geq \sum_{x \in \calV'} \dfrac{\langle u_y^{\alpha}, P_{\alpha} g(x)^{\otimes 2}\rangle^2}{\| P_{\alpha} g(x)^{\otimes 2}\|_2^2} = \sum_{x \in \calV'} \dfrac{\langle \Pi u_y^{\alpha}, R g(x)^{\otimes 2}\rangle^2}{\| R g(x)^{\otimes 2}\|_2^2} \geq |\calV'| \cdot \dfrac{1-\eps}{m'} \cdot \| \Pi u_y^{\alpha} \|_2^2,
\end{align*}
where we first use the fact $Rg(x)^{\otimes 2} \in V'$ to project $u_y^{\alpha}$ to $V'$, and the last line used the $\eps$-isotropy of (\ref{eq:eps-iso}) on the unit vector $\Pi u_y^{\alpha} / \| \Pi u_y^{\alpha}\|_2$. Finally, notice that
\[ \| \Pi u_y^{\alpha}\|_2^2 = \dfrac{\|R^{-1} \Pi f(y)^{\otimes 2}\|_2^2}{\|R^{-1} \Pi f(y)^{\otimes 2}\|_2^2 + \alpha^2 \| (I-\Pi) f(y)^{\otimes 2}\|_2^2} \mathop{\longrightarrow}^{\alpha \to 0} \left\{\begin{array}{cc} 1 & \Pi f(y)^{\otimes 2} \neq 0 \\
0 & \text{o.w.} \end{array} \right. . \]
Note, the final condition, that $\Pi f(y)^{\otimes 2} \neq 0$ must occur whenever there is some $x \in \calV'$ with $\langle f(y)^{\otimes 2} , g(x)^{\otimes 2}\rangle > 0$, since $g(x)^{\otimes 2} \in V'$, i.e., when there exists some $x \in \calV'$ where $A_{y,x} \neq 0$. Thus,
\begin{align*}
\liminf_{\alpha \to 0} \| B^{\alpha} \|_F^2 \geq \frac{1-\eps}{m'} \cdot |\calV'|\sum_{y \in \{0,1\}^d} \ind\left\{ \exists x \in \calV' : A_{y,x} > 0 \right\} &\geq \frac{1-\eps}{m'} \sum_{y \in \{0,1\}^d} \sum_{x \in \calV'} \ind\left\{ A_{y,x} > 0\right\} \\
	&\geq \dfrac{1-\eps}{m'} \sum_{x \in \calV'} (1-2\delta) \cdot 2^d,
\end{align*}
where the last line used the fact every column $x \in \calV' \subset \calV$ has at least $(1-2\delta)$-fraction of entries $y \in \{0,1\}^d$ where $A_{y,x} > 0$. We can now lower bound this right-most expression by $(1-\eps) (1-2\delta) \cdot 2^d |\calV'| / m'$, and Condition~\ref{en:size} implies this is at least $(1-\eps) (1-2\delta) 4^d / (2m)$, and the final bound follows since $m \leq s^2$. As a result, there must exist some non-zero setting of $\alpha > 0$ giving the desired bound.
\end{proof}

The fact $B$ has all entries in $[0,1]$ implies that the sum of all entries in $B$ is at least the squared Frobenius norm. Suppose we let $M$ be any $2^d \times 2^d$ matrix which satisfies $M_{z,x} = 1$ whenever $z \notin \calB(x, \tau)$, then, we have the following chain of inequalities
\begin{align*}
\| B\|_F^{2} \leq \sum_{z,x \in \{0,1\}^d} B_{z,x} = \langle B, M \rangle \leq \| B \|_{\Tr} \cdot \| M \|_{\textrm{op}} \leq s \| B \|_F \cdot \|M \|_{\textrm{op}},
\end{align*}
where the upper bound on $\langle B, M\rangle$ in terms of the trace norm of $B$ and operator norm of $M$ is the matrix version of Holder's inequality with the $(1,\infty)$-pair, and the upper bound $\| B \|_{\Tr} \leq s \| B\|_F$ is because $B$ has rank at most $s^2$. Simplifying the squared Frobenius norm and substituting Lemma~\ref{lem:fro-lb}, we have the lower bound on $s^2$ for any small enough constant $\delta \in (0, 1/2)$: 
\[ s^2 \gsim \max \left\{\dfrac{2^d}{\|M\|_{\textrm{op}}} : M_{z,x} = 1 \text{ whenever $z \notin \calB(x,\tau)$} \right\}. \] 
Hence, it remains to show that there exists a matrix $M$ with $1$'s on entries $(z,x)$ whenever $z \notin \calB(x, \tau)$ which attains smallest possible spectral norm, which we do in the next two lemmas by reducing to a problem on extremal polynomials.

\begin{definition}\label{def:cr}
For $d, \tau \in \N$ with $\tau \leq d$, let $V(d, \tau)$ be the maximum value $q(0)$ over degree-$\tau$ polynomials where $|q(j)| \leq 1$ for $j \in \{1, \dots, d\}$.  
\end{definition}

\begin{lemma}\label{lem:ext-polynomial}
There exists a $2^d \times 2^d$ matrix $M$ where $M_{z,x} = 1$ whenever $z \notin \calB(x, \tau)$ and $\| M \|_{\textrm{\emph{op}}} \leq 2^{d} / (1 + V(d, \tau))$.
\end{lemma}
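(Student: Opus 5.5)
We build $M$ as a function of Hamming distance only, $M_{z,x}=h(\Ham(z,x))$ for some $h\colon\{0,\dots,d\}\to\R$. The constraint is $h(j)=1$ for $j>\tau$. Such a matrix lies in the Bose--Mesner algebra of the hypercube, so it is diagonalized by the characters $\chi_S$. The eigenvalue on $\chi_S$ depends only on $\ell=|S|$ and equals
\[
\lambda_\ell=\sum_{j=0}^d h(j)\,\alpha_\ell(j)\binom{d}{j}\Big/\binom{d}{\ell},
\]
where $\alpha_\ell(j)=K_\ell(j)$ is the Krawtchouk value from the technical overview. Since $K_\ell(j)\binom{d}{j}=K_j(\ell)\binom{d}{\ell}$, this simplifies to $\lambda_\ell=\sum_j h(j)K_j(\ell)$.

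Write $h=\mathbf{1}+p$ with $p$ supported on $\{0,\dots,\tau\}$. The all-ones part contributes $2^d$ to $\lambda_0$ and $0$ to every $\lambda_\ell$ with $\ell\ge1$. Therefore
\[
\lambda_\ell = 2^d[\ell=0] + \sum_{j=0}^{\tau} p_j K_j(\ell) =: 2^d[\ell=0]+Q(\ell).
\]
Here $Q$ is a univariate polynomial of degree at most $\tau$, because each $K_j$ has degree $j$ in $\ell$. Conversely, the $K_j$ for $j\le\tau$ form a basis of polynomials of degree $\le\tau$. So every polynomial $Q$ of degree at most $\tau$ arises from some choice of $p$, i.e., from some admissible $M$. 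This equivalence is the key step. The only care needed is a clean statement of the Krawtchouk symmetry and orthogonality facts; this is the observation $K^2=2^dI$ from the overview.

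Since $M$ is symmetric, $\|M\|_{\mathrm{op}}=\max\{|2^d+Q(0)|,|Q(1)|,\dots,|Q(d)|\}$. Take $q$ attaining $V(d,\tau)$ in Definition~\ref{def:cr}. The maximum exists by compactness, and $V\ge1$ because the constant $1$ is feasible. Set
\[
Q=-\frac{2^d}{1+V(d,\tau)}\,q.
\]
Then $|Q(j)|\le 2^d/(1+V)$ for $j\ge1$, and
\[
2^d+Q(0)=2^d\Bigl(1-\frac{V}{1+V}\Bigr)=\frac{2^d}{1+V}.
\]
Hence $\|M\|_{\mathrm{op}}\le 2^d/(1+V(d,\tau))$. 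The entries with $\Ham>\tau$ equal $1$ by construction, so $M$ satisfies both requirements.

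The main obstacle is only bookkeeping: confirming the spectral identity for distance-invariant matrices. To do this, compute $M\chi_S$ directly. We have
\[
(M\chi_S)_x=\sum_z h(\Ham(x,z))\chi_S(z)=\chi_S(x)\sum_w h(|w|)\chi_S(w),
\]
and $\sum_{|w|=j}\chi_S(w)=K_j(|S|)$, which gives $\lambda_\ell=\sum_j h(j)K_j(\ell)$ directly. The fact that $K_j(\ell)$ is a polynomial of degree $j$ in $\ell$ follows from its generating function $(1-t)^\ell(1+t)^{d-\ell}$. This route also avoids the reciprocity step above.
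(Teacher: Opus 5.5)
Your proof is correct and follows essentially the same route as the paper: a Hamming-distance-invariant $M$ diagonalized by the characters $\chi_S$, the fact that the achievable spectra are exactly $2^d[\ell=0]+Q(\ell)$ for polynomials $Q$ of degree at most $\tau$, and a rescaling of the extremal polynomial from Definition~\ref{def:cr} (your $Q=-\tfrac{2^d}{1+V}q$ is the paper's $\tilde q$ up to sign convention). The only difference is bookkeeping: the paper parametrizes $M$ by eigenvalue coefficients $c_\ell$ and converts to entries via $K^2=2^d I$, whereas you parametrize by the entries and compute $M\chi_S$ directly, so you never need the involution identity.
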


\begin{proof}
We will find a matrix $M$ given by finding coefficients $c_0, \dots, c_{d} \in \R$ and considering the factorization 
\[ M = 1_{2^d \times 2^{d}} - \sum_{\ell=0}^{d} c_{\ell} \sum_{\substack{S \subset [d] \\ |S| = \ell}} \chi_S \chi_{S}^{\intercal},  \qquad \text{s.t.} \qquad \begin{array}{ll} \text{(i)} & \alpha_{\ell}(j) = \sum_{|S| = \ell} \chi_{S}(x \oplus z) \text{ for $|x\oplus z| = j$}\\
								\text{(ii)} & \sum_{\ell=0}^d c_{\ell} \alpha_{\ell}(j) = 0 \text{ for all } j \in \{ \tau+1, \dots, d \} \end{array} . \]
Recall, the vectors $\chi_{S} \in \{-1,1\}^{2^d}$ for subsets $S \subset [d]$ and equivalently, their indicator vectors $1_S\in \{0,1\}^d$ are defined by letting the index $x \in \{0,1\}^d$ have $(\chi_{S})_{x} = (-1)^{\langle 1_S,  x_i\rangle}$, and the collection of all such vectors $S \subset [d]$ form an orthogonal basis of $\R^{2^{d}}$, and the all-1's matrix $1_{2^d \times 2^d} = \chi_{\emptyset} \chi_{\emptyset}^{\intercal}$. Thus, in the above factorization of $M$, the collection $\chi_S$ are eigenvectors and the eigenvalues are $2^d(1-c_0)$, and $-2^dc_1, \dots, -2^dc_d$. Furthermore, note that (i) is invariant to the specific choice of $x, z \in \{0,1\}^d$ as long as $|x \oplus z| = j$, and (ii) forces the constraint that $M_{z,x} = 1$ for $z \notin \calB(x,\tau)$; this can be seen as any $|x \oplus z| = j > \tau$ will have
\[ M_{z,x} = e_z^{\intercal} M e_x = 1 - \sum_{\ell=0}^d c_{\ell} \sum_{|S| = \ell} \chi_S(z) \chi_{S}(x) = 1 - \sum_{\ell=0}^d c_{\ell} \sum_{|S| = \ell} \chi_S(x \oplus z) = 1 - \sum_{\ell=0}^d c_{\ell} \alpha_{\ell}(j) = 1. \]
The desired spectral norm on $M$ is given by the following linear program, which we will upper bound by relating to polynomials
\begin{align}
\min_{c_0,\dots, c_d \in \R} 2^d \cdot \max\left\{ |1 - c_0|, |c_1|, \dots, |c_d| \right\} \qquad \text{s.t.} \qquad \sum_{\ell=0}^d c_{\ell} \cdot\alpha_{\ell}(j) = 0 \text{ for all $j \in \{\tau + 1, \dots, d \}$}.
\label{eq:coeff-prog}
\end{align}
The translation proceeds by considering the $(d+1) \times (d+1)$ matrix $K$ (for the Krawtchouk matrix), whose $(j,\ell)$-th entry is $\alpha_{\ell}(j)$. Then, the constraints in (\ref{eq:coeff-prog}) can be succinctly written as that of ensuring that the vector $p = K c \in \R^{d+1}$ has the bottom entries $j \in \{ \tau+1, \dots, d \}$ being exactly zero. We use the fact $K^2 = 2^d \cdot I$,\footnote{To see why, consider the following perspective. Let $F$ be a $2^d \times (d+1)$ matrix whose columns correspond to vectors $x \in \{0,1\}^d$ and $F_{x, \ell}$ is the indicator variable that $|x| = \ell$; as an operator, $F$ receives a specification $v$ of $(d+1)$ values and generates the symmetric function $Fv \colon \{0,1\}^d \to \R$ where $(Fv)(x) = v_{|x|}$, and if we let $D$ be the $2^d \times 2^d$ diagonal matrix which rescales by \smash{$D_{x,x} = \binom{d}{|x|}^{-1}$}, then $DFv$ is the symmetric function whose sums at various ``levels'' of the hypercube is given by $v$. Notice, letting $H$ be the $2^d \times 2^d$ Hadamard matrix $H_{y,x} = (-1)^{\langle x, y \rangle}$, we have $K = F^{\intercal} H DF$---as an operator, take $v$ and build the symmetric function whose ``levels'' sum to that value, evaluate Fourier coefficients with $H$ scaled by $2^d$, and sum them according to their levels. Then, $K^2$ applies this procedure twice; importantly, $DFF^{\intercal}$ is the $2^d \times 2^d$ block-diagonal matrix whose blocks correspond to vectors in the same level, and thus acts as the identity on symmetric functions. Since the Fourier coefficients of a symmetric function are also symmetric, $K^2 = F^{\intercal} HH DF = 2^d F^{\intercal} DF = 2^d I$.} in order to translate (\ref{eq:coeff-prog}) as that of optimizing over $p$ instead of $c$ (by multiplying by $K/2^d$ on both sides of $p = Kc$), leaving us with 
\begin{align}
\min_{p \in \R^{d+1}} \max\left\{ |2^d - (Kp)_0| , |(Kp)_{1}|, \dots, |(Kp)_{d}| \right\} \qquad \text{s.t.}\qquad p_{\tau+1} = \dots = p_{d} = 0. \label{eq:poly}
\end{align}
The final step comes from the fact that, as we vary the choice of $p \in \R^{d+1}$ with $p_{\tau+1} = \dots = p_d = 0$, the collection of vectors $Kp$ span the subspace of polynomial evaluations $(q(0), \dots, q(d)) \in \R^{d+1}$ of univariate polynomials $q$ of degree at most $\tau$. This is because, for any $j \in \{0,\dots, d\}$, $(Kp)_j = \sum_{\ell=0}^{\tau} p_{\ell} \alpha_j(\ell)$ may be expressed as
\begin{align*} 
\sum_{\ell=0}^{\tau} p_{\ell} \left(\sum_{k=0}^{\min\{ \ell, j \}} (-1)^k \binom{j}{k} \binom{d-j}{\ell-k} \right)   &= \sum_{\ell=0}^{\tau} p_{\ell} \left(\sum_{k=0}^{\ell} \dfrac{(-1)^k}{k! (\ell-k)!} \prod_{t=0}^{k-1} (j-t) \prod_{t=0}^{\ell-k-1} (d - j - t) \right)  \\
			&= \sum_{\ell=0}^{\tau} p_{\ell} \left( \gamma_{\ell}^{(\ell)} \cdot j^{\ell} + \gamma^{(\ell)}_{\ell-1} \cdot j^{\ell-1} + \dots + \gamma^{(\ell)}_{0} \cdot j^0 \right),
\end{align*}
where using the convention $0! = 1$ and because $\gamma_{\ell}^{(\ell)} = (-1)^{\ell} \sum_{k=0}^{\ell} 1/(k!(\ell-k)!) > 0$ (which means the $\tau+1$ polynomials $\alpha_j(\ell)$ for $\ell \in \{0, \dots, \tau\}$ are independent). 

In other words, (\ref{eq:poly}) asks to find a degree-$\tau$ univariate polynomial $\tilde{q} \colon \R \to \R$ which minimizes the maximum value of $|2^d - \tilde{q}(0)|, |\tilde{q}(1)|, \dots, |\tilde{q}(d)|$. We let $\tilde{q}(t) = 2^d / (1 + V(d,\tau)) \cdot q(t)$, where $q$ is the polynomial from Definition~\ref{def:cr}, where $2^d - \tilde{q}(0) = 2^d / (1+V(d,\tau))$ and $|\tilde{q}(\ell)| \leq 2^d / (1+V(d,\tau))$ for all $\ell$, giving the desired bound.
\end{proof}

\begin{lemma}\label{lem:cr}
$V(d, \tau) = \exp\left(\Omega(\tau^2/d)\right)$.
\end{lemma}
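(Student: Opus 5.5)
The plan is to construct an explicit degree-$\tau$ polynomial $q$ with $|q(j)| \le 1$ for $j \in \{1,\dots,d\}$ and $q(0) = \exp(\Omega(\tau^2/d))$. The construction uses a Chebyshev polynomial, as in the approximate-degree analysis of $\mathsf{OR}$ mentioned in the technical overview. We may assume $\tau \ge C\sqrt{d}$ for a large constant $C$; otherwise the claimed bound is $\exp(O(1))$ and is witnessed by $q \equiv 1$, giving $V(d,\tau) \ge 1$.

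\textbf{Basic construction.} Let $T_k$ denote the degree-$k$ Chebyshev polynomial of the first kind. Take the affine map $\phi(t) = \frac{d+1-2t}{d-1}$, which sends $[1,d]$ onto $[-1,1]$ and sends $0$ to $1 + \frac{2}{d-1}$. Set $q(t) = T_\tau(\phi(t))$. Then $|q(j)| \le 1$ for all $j \in \{1,\dots,d\}$, and $q(0) = T_\tau(1+\mu)$ with $\mu = 2/(d-1)$. Using $T_k(1+\mu) = \cosh(k \operatorname{arccosh}(1+\mu))$ and $\operatorname{arccosh}(1+\mu) \ge \sqrt{\mu}$ for small $\mu$, we get $q(0) \ge \tfrac12 \exp(\tau\sqrt{2/d})$.

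This only yields $\exp(\Omega(\tau/\sqrt{d}))$, which is weaker than $\exp(\Omega(\tau^2/d))$ once $\tau \gg \sqrt d$. So the basic construction is insufficient, and the main step is the amplification below.

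\textbf{Amplification (main obstacle).} The standard trick for $\epsilon$-dependence in approximate degree is to compose or multiply low-degree Chebyshev pieces. The cleanest route is to split $\tau = k \cdot m$ with $m = \lceil C\sqrt{d}\rceil$ and $k = \lfloor \tau/m \rfloor = \Theta(\tau/\sqrt{d})$, and set $q = (T_m \circ \phi)^k$. This has degree at most $\tau$ and still satisfies $|q(j)| \le 1$ on $\{1,\dots,d\}$. At $0$,
\[ q(0) = T_m(1+\mu)^k \ge \bigl(\tfrac12 e^{m\sqrt{2/d}}\bigr)^k \ge \bigl(\tfrac12 e^{C\sqrt 2}\bigr)^k \ge e^{k} = \exp\left(\Omega(\tau/\sqrt d)\right), \]
for $C$ large. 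This is again only $\exp(\Omega(\tau/\sqrt d))$, so plain powering does not help. The real gain must come from exploiting that the constraint is only imposed at the integers $\{1,\dots,d\}$, and not on all of $[1,d]$; this is exactly what the $\sqrt{d\log(1/\eps)}$ approximate degree of $\mathsf{OR}$ \cite{BCdWZ99,BT22} uses.

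I would therefore follow the construction of Buhrman et al.\ / Bun--Thaler. First handle the points near $0$: for $j \in \{1,\dots,L\}$ with $L = \Theta(\tau^2/d)$, use a factor such as $\prod_{i=1}^{L}(1 - t/i)$. This factor vanishes on $\{1,\dots,L\}$, has value $1$ at $0$, and has degree $L \le \tau/2$. It is bounded by roughly $\binom{j}{L} \le e^{O(L\log(d/L))}$ on the remaining integers. Multiply it by a Chebyshev polynomial $T_{\tau/2}$ that is rescaled to $[L, d]$ and normalized to be tiny there. On $[L,d]$ the Chebyshev factor has value at most $1$, while at $0$ it is amplified by $\exp(\tau\sqrt{L/d}) = \exp(\Theta(\tau^2/d))$, since the gap between $0$ and the interval $[L,d]$ is now of relative size $L/d$. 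The product is then normalized by its maximum over $\{1,\dots,d\}$. The delicate calculation, which I expect to be the main obstacle, is checking that the gain $\exp(\tau\sqrt{L/d})$ at $0$ dominates the loss from the growth of the vanishing factor on $\{L,\dots,d\}$. Choosing $L = c\tau^2/d$ with a small constant $c$ should balance these terms, up to the logarithmic factors the $\tilde\Omega$ in Theorem~\ref{thm:lb-intro} permits, or exactly if the sharper bound of \cite{BT22} is invoked. Concluding, $q(0)/\max_{j\ge1}|q(j)| = \exp(\Omega(\tau^2/d))$, which gives the bound on $V(d,\tau)$.
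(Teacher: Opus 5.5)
\textbf{The gap: your explicit construction loses a $\log(d/\tau)$ factor in the exponent.} It does not reach the stated bound, and the claim that a small constant choice of $L=c\tau^2/d$ ``should balance these terms'' is false in general. Take $P(t)=\prod_{i=1}^{L}(1-t/i)\cdot T_s(\psi(t))$, with $\psi$ mapping $[L,d]$ onto $[-1,1]$ and $L+s\le\tau$. Since $T_s(-1)=\pm1$, you get $|P(d)|=\binom{d-1}{L}\ge ((d-1)/L)^{L}$. On the other side, $\mathrm{arccosh}(1+\mu)\le\sqrt{2\mu}$ gives $P(0)=T_s\bigl(1+\tfrac{2L}{d-L}\bigr)\le \exp\bigl(2\tau\sqrt{L/(d-L)}\bigr)$. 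Write $L=\beta\tau^2/d$. Then $P(0)/\max_{1\le j\le d}|P(j)|$ is roughly at most $\exp\bigl((\tau^2/d)(2\sqrt{2\beta}-2\beta\log(d/\tau))\bigr)$. Optimizing over $\beta$ gives only $\exp\bigl(O(\tau^2/(d\log(d/\tau)))\bigr)$. Concretely, at $\tau=\sqrt{d}\log d$ the lemma asserts $V(d,\tau)\ge d^{\Omega(\log d)}$, whereas this product certifies only $d^{O(1)}$. So your hedge ``up to logarithmic factors'' is a strictly weaker statement than Lemma~\ref{lem:cr}. The loss is intrinsic to the scheme ``vanish on $\{1,\dots,L\}$, Chebyshev on $[L,d]$''; tuning constants cannot remove it.

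\textbf{The fix is your own fallback, which is the paper's entire proof.} The missing ingredient is the tight bound $\widetilde{\deg}_{\epsilon}(\mathsf{OR}_d)=O(\sqrt{d\log(1/\epsilon)})$. By Theorem~14 of \cite{BT22} (building on \cite{BCdWZ99}), after symmetrization there is a univariate $f$ of degree $\tau$ with $f(0)\in[-1-\epsilon,-1+\epsilon]$ and $f(i)\in[1-\epsilon,1+\epsilon]$ for $i\in\{1,\dots,d\}$, where $\epsilon=\exp(-\Omega(\tau^2/d))$. Then $q=(1-f)/\epsilon$ satisfies $|q(i)|\le 1$ on $\{1,\dots,d\}$ and $q(0)\ge(2-\epsilon)/\epsilon=\exp(\Omega(\tau^2/d))$. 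That tight bound needs a genuinely finer construction than the product you sketch, so either cite it as the proof or reproduce that construction.

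Note that for the paper's application, $\tau=d/(12c^2)$, so $\log(d/\tau)=O(\log c)$. Your product with $\beta\approx 1/\log^2(d/\tau)$, rather than a constant, would give $\exp(\Omega(\tau^2/(d\log c)))$. That still yields $n^{\tilde\Omega(1/c^2)}$ in Theorem~\ref{thm:lb-intro}, but it does not prove Lemma~\ref{lem:cr} as stated.
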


\begin{proof}
    We will utilize the univariate polynomial underlying the approximate degree bound for the $\mathsf{OR}$ function of~\cite{BCdWZ99, BT22}, which shows that the $\eps$-approximate degree of $\mathsf{OR}$ on $n$ input bits is $O(\sqrt{n \log (1/\eps)})$ for any $\eps \in [3^{-n}, 1)$ (see Theorem~14 in~\cite{BT22}). In particular, we consider the smallest possible $\eps \in [3^{-n}, 1]$ for which there is a univariate polynomial $f \colon \R \to \R$ of degree $\tau$ which satisfies $f(i) \in [1-\eps, 1 + \eps]$ for every $i \in \{1, \dots, d\}$ and $f(0) \in [-1-\eps, -1 +\eps]$; by Theorem~14 of~\cite{BT22}, we may take $\eps \leq \exp\left( -\Omega(\tau^2 / d)\right)$. We now consider the re-scaled polynomial $p(i) = -(f(i) - 1) / \eps$, which satisfies $|p(i)| \leq 1$ for all $i \in \{1, \dots, d\}$ and $p(0) \geq (2-\eps)/\eps = \exp(-\Omega(\tau^2/d))$.
\end{proof}

\ignore{
\begin{proof}
The construction considers two auxiliary polynomials 
\[ q(t) = \frac{(-1)^m}{m!} \prod_{j=1}^m \left(t - j\right) = \left\{\begin{array}{cl}  1 & t = 0\\ 0 & t \in \{1, \dots, m\} \\
					\binom{t}{m} & t \in \{m+1, \dots, d\} \end{array} \right. , \qquad C(t) = T_{s}\left( \dfrac{(d + m+1) - 2t}{d-m-1}\right),\]
where $T_s$ is the degree-$s$ Chebyshev polynomial of the first kind, and consider the degree-$(m+s)$ polynomial 
\[ P(t) = q(t) \cdot C(t) / \left(\frac{ed}{m}\right)^m. \] 
We first bound the values of $P(1), \dots, P(d)$. Note, $P(1)= \dots = P(m) = 0$ and for $m+1 \leq t \leq d$, $|P(t)|$ is at most $\max\{ |T_s(\xi)| : \xi \in [-1,1]\} \leq1$, since the denominator in $P(t)$ always overcomes $q(t)$, and the input to $T_s$ has numerator $(d+m+1)- 2t$ which always lies between $-(d - m- 1)$ and $d - m-1$ for $t \in \{m+1, \dots, d\}$. Then, as long as $m$ smaller than $d$ by a large constant factor and $m / d$ is much larger than $1/s^2$ by a constant factor,
\begin{align*}
P(0) = T_s\left(1 + \frac{2(m+1)}{d-m-1}\right) / \left(\frac{ed}{m}\right)^m &\geq \exp\left( \Omega(\sqrt{m/d} \cdot s) - O(m \log(d/m)) \right),
\end{align*}
where we use the fact $T_s(1 + \delta)$ grows like $\exp(\Omega(s \sqrt{\delta}))$ for small $\delta \geq c_0 / s^2$ for a large enough constant $c_0 > 1$. Consider setting $m = \lceil \zeta \tau^2 / d \rceil$ and $s = \lceil \tau/2 \rceil$ for $\zeta$ which we fix shortly (to $c_0/ \log^2(d/\tau)$ for a small $c_0$). Then, $m$ is a small constant factor of $d$ as long as $\zeta$ is smaller than a small constant, and $m/d = \zeta \tau^2 / d^2$ is larger than $4/\tau^2$ by a constant factor whenever $\tau \gg \sqrt{d} / \zeta^{1/4}$. In other words, we are left with a lower bound on $P(0)$ of 
\[ \exp\left( \Omega(\sqrt{\zeta}\cdot \tau^2 / d) - O(\zeta \tau^2 \log(d/\tau) / d) \right) = \exp\left( \Omega(\sqrt{\zeta} \cdot \tau^2 / d) \right), \]
for $\zeta$ being $c_0/\log^2(d/\tau)$ for a large enough polynomial factor, and the desired lower bound holds whenever $\tau \gg \sqrt{d} \cdot \polylog(d)$. But note that the desired lower bound becomes vacuous $\tau$ does not meet the above threshold.
\end{proof} }

\subsection{A Streaming Lower Bound from Theorem~\ref{thm:ln-sketch-lb}}

Combining Theorem~\ref{thm:ln-sketch-lb} with a slight modification of the streaming-to-sketching reduction of~\cite{KPSW25}, we obtain the dynamic streaming lower bound claimed in Theorem~\ref{thm:lb-intro}. The statement requires a technical assumption that the randomized streaming algorithm $\calA$ has bounded randomness complexity; namely, $\calA$ must use a finite amount of randomness, which may depend arbitrarily on $n$, $\Delta$, $d$, $c$ and $r$, but importantly \emph{not} on the underlying stream length. This technical condition arises in prior work of~\cite{LNW14, KPSW25} and is needed for the streaming-to-sketching reduction. 

\StreamLb*

We use the following streaming-to-sketching reduction, obtained from~\cite{KPSW25}. Although the above theorem is only stated only for $m=1$ in~\cite{KPSW25}, the extension to general $m\in\N$ is implicit in their proof; we defer additional details to Appendix \ref{sec: appendix-B}. We now prove Theorem~\ref{thm:lb-intro}.

\begin{lemma}[Theorem~4 of~\cite{KPSW25} applied with $\calD$ from Definition~\ref{def:hard}]\label{lem:stream-to-sketch}
For $n \in \N$ and $c > 1$, let $d$ be as in Definition~\ref{def:hard} and set $r = \sqrt{d} / (c\sqrt{3})$. Let $\delta$ denote the constant in Definition~\ref{def:r-c-n-diam}. 
\begin{itemize}
\item If $\calA$ is a randomized streaming algorithm solving $(r,c,n)$-diameter with bounded randomness complexity,
\item For any $m\in\N$, there is an $(s,O((m^2N)^{N/2}))$-linear sketch $(T,\Dec)$ with $s\le \calS^+(\calA, m) / \log (m+1)$ satisfying
\begin{align}
\Prx_{\bv\sim \calD}\left[ \Dec(T\bv) \neq \diam_{r,c,n}(\bv) \right] \leq O(\delta).
\end{align}
\end{itemize}
\end{lemma}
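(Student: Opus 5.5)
The statement to prove is Lemma~\ref{lem:stream-to-sketch}. I would not prove it from scratch. Instead I would invoke the scale-invariant streaming-to-sketching reduction (Theorem~4 of~\cite{KPSW25}) as a black box, and check that its hypotheses hold for the partial function $\diam_{r,c,n}$ and the distribution $\calD$ of Definition~\ref{def:hard}. The work is to verify three hypotheses and then track parameters for general $m$.

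\textbf{Hypotheses.} The three facts to verify are the following.
\begin{itemize}
\item[(a)] \emph{Scale invariance.} $\diam_{r,c,n}(v)$ depends only on $\supp(v)$, so $\diam_{r,c,n}(\lambda v)=\diam_{r,c,n}(v)$ for every positive integer $\lambda$. This is exactly the scale invariance the reduction uses.
\item[(b)] \emph{Correctness on all streams.} $\calA$ is correct on streams of unbounded length and magnitude, with randomness complexity independent of stream length. In particular it is correct on every frequency vector in the support of $\calD$, and on every positive multiple of such a vector.
\item[(c)] \emph{Nonnegative integer support.} $\calD$ is supported on nonnegative integer vectors, so the strict turnstile setting of~\cite{KPSW25} applies.
\end{itemize}

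\textbf{Structure of the reduction.} I would recall the argument of~\cite{LNW14,KPSW25}. Fix the randomness of $\calA$, which takes finitely many values by bounded randomness complexity. Consider the automaton on internal states and define the equivalence relation "streams reaching the same state". The states reachable from frequency-zero streams form a lattice kernel $L\subseteq\Z^N$, and the final state is determined by the coset $v+L$. Scale invariance lets one pass to a sublattice in which only the "direction" information survives. A sketch matrix $T$ whose kernel contains the relevant lattice then encodes the state, and $\Dec$ reads off $\calA$'s output. Averaging over the randomness, via Yao's principle on $\calD$, gives a single deterministic $(T,\Dec)$ with error $O(\delta)$.

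\textbf{Parameter tracking.} The entry bound $O((m^2N)^{N/2})$ would come from choosing a short lattice basis via Hadamard's inequality, with vectors of norm at most $m\sqrt N$ and determinant bounded accordingly. This matches the $m=1$ bound in~\cite{KPSW25}.

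\textbf{Main obstacle: $s\le \calS^+(\calA,m)/\log(m+1)$ for general $m$.} Theorem~4 of~\cite{KPSW25} is stated only for $m=1$. The plan is to extend it by counting. The vectors $\{0,\dots,m\}^N$ restricted to the sketch's image must map injectively, modulo the lattice, into distinct states in $\calW^+(\calA,m)$ along $s$ independent coordinates. That requires $(m+1)^s\le|\calW^+(\calA,m)|$, and taking logarithms gives the claimed bound on $s$.

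To make this rigorous, I would redo the $m=1$ counting step of~\cite{KPSW25} with the cube $\{0,1\}^N$ replaced by $\{0,\dots,m\}^N$. The point to check is that states reached by vectors with $\|v\|_\infty\le m$ already determine $s$ independent directions of $T$. I would argue this by picking $s$ linearly independent rows of $T$ and a set of vectors in $\{0,\dots,m\}^N$ that realize $(m+1)^s$ distinct sketch values. Each step of~\cite{KPSW25} used only the $m=1$ cube through this counting, so I expect it to go through verbatim.
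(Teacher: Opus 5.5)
Your route matches the paper's: apply Theorem~4 of~\cite{KPSW25} using scale invariance of $\diam_{r,c,n}$, and extend to general $m$ by a counting bound $(m+1)^s \le |\calW^+|$ together with a Hadamard bound on the entries. One hypothesis check, however, is false as you state it.

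In (b) you claim $\calA$ is correct on every frequency vector in the support of $\calD$. But $\calD$ is not supported on $\diam_{r,c,n}^{-1}(\{0,1\})$. Suppose $\by$ is included and $3\tau < \Ham(\bx,\by) < d/3-\tau$. Then the squared diameter is $\Ham(\bx,\by)+\tau$, which lies strictly between $r^2 = 4\tau$ and $(cr)^2 = d/3$. So $\diam_{r,c,n}(\bv)=*$, and $\calA$ has no obligation on such inputs. Theorem~4, on the other hand, requires the distribution to be supported where the function is defined, so it cannot be applied to $\calD$ directly. The paper handles this by applying Theorem~4 to $\calD$ conditioned on $\diam_{r,c,n}(\bv)\in\{0,1\}$. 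This event contains $\{\Ham(\bx,\by)\ge d/3\}$, which has probability $1-e^{-\Omega(d)}$. Every $*$-input is automatically counted as an error in $\Pr[\Dec(T\bv)\neq\diam_{r,c,n}(\bv)]$, so this $e^{-\Omega(d)}$ loss must be absorbed into $O(\delta)$, which holds for large $n$.

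Second, make the general-$m$ counting precise on the right object. For $\calA$ itself, the final state is not a function of the coset $v+L$. That holds only for the path-independent algorithm $\calB$ produced by the first step of the reduction. So you need, as the paper records, that $\calS^+(\calB,m)\le \calS^+(\calA,m)$ for every $m$, not just for $m=1$.

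With that in place your count is correct:
\begin{itemize}
\item Since $T$ has rank $s$, pick $s$ linearly independent \emph{columns} $J$ of $T$ (picking rows does not directly produce the test vectors).
\item The $(m+1)^s$ vectors in $\{0,\dots,m\}^J$ have distinct images under $T$.
\item Hence they lie in distinct cosets of $L\subseteq \ker T$, and so reach distinct states of $\calB$.
\end{itemize}
This is exactly the content of Lemma~C.4 of~\cite{KPSW25}, which the paper cites. With these two repairs your argument coincides with the paper's.
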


\begin{proof}[Proof of Theorem~\ref{thm:lb-intro}]
    Let $\calA$ be a randomized streaming algorithm solving $(r,c,n)$-diameter with bounded randomness complexity. By Lemma~\ref{lem:stream-to-sketch}, there is a $(s,O((m^2N)^{N/2}))$-linear sketch $(T,\Dec)$ with $s \le \calS^+(\calA,m) / \log(m+1)$ with success probability $1-O(\delta)$ on $\calD$. Theorem~\ref{thm:ln-sketch-lb} implies $s\ge n^{\Omega(1/ (c^2\log^2 c))}$, so we conclude that $\calS^+(\calA,m)\ge n^{\tilde{\Omega}(1/c^2)} \cdot  \log(m+1)$.
\end{proof}
\section{Upper Bound} \label{sec: upper-bound}

In this section, we show how the insertion only streaming algorithm for Euclidean diameter estimation in~\cite{I03} can be adapted to the dynamic streaming setting. In particular, we show the following

\begin{theorem} \label{thm: upper-bound}
    There exists a dynamic streaming algorithm $\calA$ that solves the $(r, c, n)$-diameter problem with $\calS^+(\calA,m) = \tilde{O}(n^{2/(c^2 - 1)} \cdot \log m)$ bits of space for all $m \in \N$.\footnote{We assume for convenience, that the $n$ point sub-metric has aspect ratio $\poly(n)$ (to support aspect ratio $\Delta$, we pay an additive $\polylog(n\Delta)$ factor in the bit-complexity).}
\end{theorem}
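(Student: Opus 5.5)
We plan to adapt Indyk's furthest-neighbor construction to a linear-sketch-friendly form, replacing the insertion-only bookkeeping with $\ell_0$-samplers. By Johnson--Lindenstrauss we first project to $d' = O(\log n)$ dimensions, so that all pairwise distances among the $n$ surviving points are preserved up to $1\pm o(1)$. The map is linear and must preserve all $O(n^2)$ pairs of the final point set. It can therefore be applied on the fly to each inserted or deleted point, after rounding to a fine grid whose resolution is set by the aspect-ratio assumption.

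Next we draw $L = \tilde O(n^{2/(c^2-1)})$ independent Gaussian directions $g_1,\dots,g_L$ and a threshold $t$. For each direction $g_i$ we maintain an $\ell_0$-sampler (or a small sparse-recovery structure) over the points $x$ with $\langle g_i, x\rangle \ge t$, bucketed by their projection values. Each sampler is a linear sketch of the frequency vector restricted to that set, so it supports insertions and deletions. The algorithm also keeps one $\ell_0$-sample $p$ of the whole support.

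The decision rule is as follows. If $\diam \ge cr$, some point $y$ is far from $p$, at distance at least roughly $cr/2$ after centering. If $\diam \le r$, every point is within $r$ of $p$. We compare the distribution of $\langle g, y - p\rangle$ in the two cases. Choosing $t \approx \sqrt{2\ln(n^2)/(c^2-1)}\cdot r$-scaled, a direction $g$ captures the far point above threshold with probability about $\exp(-t^2/(2c^2r^2))$, roughly $n^{-2/(c^2-1)}$. Meanwhile each near point exceeds the threshold with probability at most $\exp(-t^2/2r^2)$, which we tune to be at most $n^{-2c^2/(c^2-1)}$. A union bound then applies over the $n^2$ pairs; handling all pairs rather than just $n$ points is the source of the squared exponent. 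Repeating $L$ times yields, with constant probability, a direction where a far candidate is found and no false positive occurs. We recover the candidate via the sampler and test its distance to $p$ exactly.

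The main obstacle is analyzing this without a fixed reference point. Since $p$ is itself sampled, the threshold must be taken relative to projections of the support. We handle this by storing, per direction, sketches that recover the maximum projection value approximately. Tracking a max is not linear, so we use $\ell_0$-sampling restricted to level sets of a geometric grid of thresholds, which costs $O(\log n)$ levels. The space then follows by counting: $L$ directions times polylog samplers, each using $O(\log m \cdot \polylog n)$ bits since counters hold frequencies up to $m$. This gives $\calS^+(\calA,m) = \tilde O(n^{2/(c^2-1)}\log m)$. The randomness can be fixed independently of the stream length, matching the hypotheses used for the lower bound.
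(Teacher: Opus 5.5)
\textbf{The reference point costs a factor of $2$.} Your decision rule compares candidates against a sampled reference point $p$, and your tail calculation does not account for the resulting loss. As you note yourself, when $\mathrm{diam}\ge cr$ the farthest surviving point from $p$ is only guaranteed to be at distance $cr/2$ (take $p$ at the midpoint of a diametral pair). Yet the capture probability $\exp(-t^2/(2c^2r^2))$ you use is the one for a point at distance $cr$. Done correctly, you are solving approximate furthest neighbor from $p$ with ratio $c/2$: distractors lie within distance $r$ and the target is at distance $cr/2$. The number of Gaussian directions needed is then $n^{\Theta(1/(c^2/4-1))}$, e.g.\ $n^{8/(c^2-4)}$ with your union bound over $n^2$ pairs. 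This exceeds $n^{2/(c^2-1)}$ for every $c$, and gives nothing at all for $c\le 2$. It is the same factor-$2$ loss the paper flags in Corollary~5.3 of~\cite{KPSW25}.

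\textbf{The thresholds cannot be fixed in advance.} $p$ is only known at the end of the stream, so the level sets $\{x:\langle g_i,x\rangle\ge\langle g_i,p\rangle+t\}$ cannot be hard-wired into the sketch. A geometric grid of thresholds gives multiplicative resolution around the origin. What you need is additive resolution $O(r)$ at an arbitrary offset, since the point set may sit at distance much larger than $r$ from the origin. ``Recovering the maximum projection approximately'' on such a grid does not fix this.

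\textbf{How the paper avoids both issues.} It never uses a reference point.
\begin{enumerate}
\item It embeds with $x\mapsto Gx/\rho$ for a $k\times d$ Gaussian matrix $G$ and $\rho=\frac{1}{\alpha}\sqrt{2\ln(\Theta(kn^2))}$. This gives $\|x-y\|_2\le\|f(x)-f(y)\|_\infty\le\alpha\|x-y\|_2$ simultaneously for all $n^2$ pairs, with $\alpha=c/(1+\gamma)$ and $\gamma=\Theta(1/\log n)$. The choice $k=\tilde{O}(n^{2/(\alpha^2-1)})=\tilde{O}(n^{2/(c^2-1)})$ is exactly what makes $k^{1-1/\alpha^2}n^{-2/\alpha^2}$ polylogarithmic; this is where the union bound over pairs enters.
\item It then distinguishes $\ell_\infty$-diameter at most $\alpha r$ from at least $cr$ by taking the maximum over coordinates of the spread (max minus min). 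The spread is translation invariant, so no reference point is needed.
\item For each coordinate, the line is cut into intervals of width $\gamma\alpha r/3$, and $O(\log(k/\delta\gamma)/\gamma)$ $\ell_0$-samplers are run over the occupied intervals. Either more than $O(1/\gamma)$ distinct intervals are returned, which certifies spread at least $(1+\gamma)\alpha r = cr$. Or all occupied intervals are recovered, and the two extreme ones give the spread to within additive $\gamma\alpha r/3$.
\end{enumerate}
This interval-sampling trick is what replaces your grid of thresholds, and it costs only polylogarithmic factors.
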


Our proof will follow by first exhibiting an embedding of $\ell_2$ to $\ell_\infty$, then utilizing a dynamic streaming algorithm for estimating diameter in $\ell_\infty$.

\subsection{Preliminaries}

\begin{lemma} [Proposition 2.1.2 of \cite{V19}] \label{lem: gaussian-tails}
    Let $\bg \sim \calN(0,1)$. Then, for all $t > 0$, we have 
    \begin{align*}
        \left( \frac{1}{t} - \frac{1}{t^3} \right)\cdot \frac{2}{\sqrt{2 \pi}} \cdot e^{-t^2 / 2} \leq \Pr \left[|\bg| \geq t \right] \leq \frac{1}{t} \cdot \frac{2}{\sqrt{2 \pi}} \cdot e^{-t^2 / 2}
    \end{align*}
\end{lemma}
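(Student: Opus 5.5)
By symmetry of the standard Gaussian, $\Pr[|\bg| \geq t] = 2\int_t^\infty \varphi(x)\,dx$ with $\varphi(x) = e^{-x^2/2}/\sqrt{2\pi}$. Both inequalities therefore reduce to one-sided bounds on the tail integral $I(t) := \int_t^\infty e^{-x^2/2}\,dx$, namely
\[ \left(\frac{1}{t} - \frac{1}{t^3}\right) e^{-t^2/2} \;\leq\; I(t) \;\leq\; \frac{1}{t}\, e^{-t^2/2}. \]
Multiplying through by $2/\sqrt{2\pi}$ then gives the statement.

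For the upper bound, I will use that $x/t \geq 1$ on the domain of integration. This gives
\[ I(t) \leq \int_t^\infty \frac{x}{t}\, e^{-x^2/2}\,dx = \frac{1}{t}\left[-e^{-x^2/2}\right]_t^\infty = \frac{e^{-t^2/2}}{t}. \]

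For the lower bound, I will integrate by parts, writing $e^{-x^2/2} = \frac{1}{x}\cdot x e^{-x^2/2}$. The boundary term vanishes at infinity, so
\[ I(t) = \frac{e^{-t^2/2}}{t} - \int_t^\infty \frac{1}{x^2}\, e^{-x^2/2}\,dx. \]
To control the remaining integral, I apply the same trick as in the upper bound. Since $1/x^2 = x/x^3 \leq x/t^3$ for $x \geq t$,
\[ \int_t^\infty \frac{1}{x^2}\, e^{-x^2/2}\,dx \;\leq\; \frac{1}{t^3}\int_t^\infty x e^{-x^2/2}\,dx = \frac{e^{-t^2/2}}{t^3}. \]
Substituting back yields $I(t) \geq (1/t - 1/t^3)e^{-t^2/2}$.

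There is no real obstacle. The only points needing care are that all integrals converge for $t>0$ and that the boundary term at infinity vanishes. For $t \leq 1$ the lower bound is non-positive and hence trivial, but the argument above covers it anyway.
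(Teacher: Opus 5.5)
Your proof is correct and complete for every $t>0$. The symmetry reduction, the comparison $x/t \ge 1$ for the upper bound, and the single integration by parts followed by $x^{-2} \le x/t^{3}$ on the remainder all check out, and the boundary term $e^{-x^2/2}/x$ does vanish at infinity.

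The paper gives no proof of this lemma; it imports it from \cite{V19}. There it is stated for the one-sided tail $\Pr[\bg \ge t]$ with constant $1/\sqrt{2\pi}$. Your first step, doubling by symmetry, is exactly what is implicitly needed to get the two-sided form the paper uses.

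For comparison, the textbook argument there proves the upper bound by substituting $x = t+y$. With $e^{-y^2/2}\le 1$, the tail is at most $e^{-t^2/2}\int_0^\infty e^{-ty}\,dy$. It proves the lower bound from the exact identity $\int_t^\infty (1-3x^{-4})e^{-x^2/2}\,dx = (1/t - 1/t^3)e^{-t^2/2}$, which one checks by differentiating the right-hand side.

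That identity is your integration by parts carried one step further. The leftover term $3\int_t^\infty x^{-4}e^{-x^2/2}\,dx$ is simply discarded as nonnegative, instead of being estimated as you do. The two inequalities are literally the same: your bound $\int_t^\infty x^{-2}e^{-x^2/2}\,dx \le t^{-3}e^{-t^2/2}$ is equivalent to that remainder being nonnegative.

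Both routes are equally elementary. Yours has the mild advantage of using one device for both directions: comparing the integrand with a multiple of $x e^{-x^2/2}$.
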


\begin{lemma}[Modified Corollary~5.3 of~\cite{KPSW25}] \label{lem: ell-infty-diameter}
    Fix any $r,\Delta > 0$ and any $c \ge 1 + \eps$ for $\eps, \delta \in (0,1)$, there is a dynamic streaming algorithm for $\diam_{U}^{r, c}$ over $\ell_{\infty}^k$ that uses a linear sketch with $\tilde{O}(k \log n \log(k / \delta\eps) / \eps)$ rows and $n$ columns. The total bit-complexity is $\tilde{O}(( k \log n \log(nm) + \log \Delta) \log(k/\delta\eps) / \eps)$, where $\Delta$ upper bounds the aspect ratio, and $m$ upper bounds the magnitude of the final frequency vector.
\end{lemma}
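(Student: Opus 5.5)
The plan is to reduce $\ell_\infty^k$ diameter to $k$ one-dimensional range problems and solve each one with a sparse-recovery linear sketch on a discretized line. The starting identity is
\[
\diam_{\ell_\infty}(v)=\max_{i\in[k]}\Bigl(\max_{x\in\supp(v)}x_i-\min_{x\in\supp(v)}x_i\Bigr),
\]
so it suffices, for each coordinate $i$, to distinguish range at most $r$ from range at least $cr$. We may assume $c\le 2$: if $c>2$, any algorithm for approximation $\min\{c,2\}$ also solves the $c$ instance. We may also assume $c\ge 1+\eps$ exactly.

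\textbf{Discretization.} For each coordinate $i$, partition $\R$ into cells of width $\eps r/4$ and let $j_i(x)=\lfloor 4x_i/(\eps r)\rfloor$ be the cell of $x_i$. Define the auxiliary vector $w^{(i)}$ indexed by cells by
\[
w^{(i)}_j=\sum_{x: j_i(x)=j} v_x .
\]
This is a linear function of $v$. Since the model is strict turnstile ($v\ge 0$), $w^{(i)}_j\neq 0$ exactly when cell $j$ is occupied. By the aspect-ratio assumption the cell indices range over $\poly(n\Delta)$ values.

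\begin{itemize}
\item \emph{Small range.} If coordinate $i$ has range at most $r$, then $w^{(i)}$ has at most $S:=4/\eps+2$ non-zero entries.
\item \emph{Large range.} If the range is at least $cr\ge(1+\eps)r$, then either $w^{(i)}$ has more than $S$ non-zero entries, or it is $S$-sparse and its support spans more than $(1+\eps)r-\eps r/2>r+\eps r/4$ in real distance.
\end{itemize}

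\textbf{Sketch and decoding.} For each $i$, maintain an $S$-sparse recovery sketch of $w^{(i)}$ that also detects non-sparsity. The standard construction, hashing into $O(S)$ buckets with fingerprints and repeated $O(\log(k/\delta))$ times, has the following guarantees with probability $1-\delta/k$:
\begin{itemize}
\item if $w^{(i)}$ is $S$-sparse, it recovers the support exactly;
\item otherwise, it reports ``not sparse.''
\end{itemize}
Composing with the linear map $v\mapsto w^{(i)}$ gives a linear sketch of $v$ with $O(S\log n\log(k/\delta\eps))=\tilde O(\log n\log(k/\delta\eps)/\eps)$ rows per coordinate.

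The decoder runs as follows. For each $i$:
\begin{itemize}
\item if the sketch reports non-sparsity, output $1$;
\item otherwise compute the recovered support's extent $(j_{\max}-j_{\min}+1)\cdot\eps r/4$ and output $1$ if it exceeds $r+\eps r/4$.
\end{itemize}
If neither test fires for any coordinate, output $0$.

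Correctness follows from the two cases above. In the small case the extent is at most $r+\eps r/4$, since rounding errs by at most one cell per endpoint. A union bound over the $k$ coordinates gives failure probability at most $\delta$. Stacking the $k$ per-coordinate sketches gives $\tilde O(k\log n\log(k/\delta\eps)/\eps)$ rows in total. Each counter holds a value at most $\poly(nm)$ times the fingerprint range, and addressing cells costs $\log\Delta$, which yields the stated bit complexity.

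\textbf{Main obstacle.} The step needing the most care is the sparse-recovery-with-detection primitive in a purely linear form. It must never certify a dense vector as sparse, since a dense vector falsely certified sparse could hide a far point. I would handle this with the usual approach: random $\pm$ fingerprints in each bucket, plus a final consistency check that re-sketches the recovered vector and compares it with the stored sketch. This check gives one-sided error $\delta/k$ with the stated $O(\log(k/\delta\eps))$ repetitions.
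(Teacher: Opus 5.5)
Your proposal is correct and follows essentially the paper's route. You split the $\ell_\infty^k$ diameter into the $k$ coordinate ranges and discretize each line into cells of width $\Theta(\eps r)$; then you either recover all $O(1/\eps)$ occupied cells and compare the extreme ones, or certify that too many cells are occupied. The one real difference is the primitive for this ``all cells, or certifiably many'' guarantee: you use $S$-sparse recovery plus a non-sparsity test on the aggregated vector $w^{(i)}$, while the paper uses $\Theta(\log(k/\delta\eps)/\eps)$ independent $\ell_0$-samplers and a coupon-collector argument.

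One small slip to fix. Rounding at both endpoints only guarantees that your small-case extent $(j_{\max}-j_{\min}+1)\eps r/4$ is below $r+\eps r/2$, and it can exceed $r+\eps r/4$ when $4/\eps$ is not an integer. Use the threshold $r+\eps r/2$ instead; it is still below the large-case extent, which exceeds $(1+\eps)r$.
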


\begin{remark} \upshape
    Corollary~5.3 in~\cite{KPSW25} is stated for $c\ge 2(1+\eps)$ rather than $c\ge 1+\eps$. It loses a factor $2$ because their algorithm first samples an arbitrary surviving point $q$ and then estimates the maximum distance from $q$ to a surviving point, which may only be half of the diameter.
    
    We can avoid this loss by estimating the diameter along each coordinate separately and taking the maximum. For each coordinate, we partition the line into intervals of width $\eta = \eps r / 3$. Let $B = 2 + 3(1+\eps) / \eps$, and maintain $L = \Theta(B(\log B + \log(k/\delta))) = \Theta(\log(k/\delta\eps) / \eps)$ independent $\ell_0$-samplers to sample intervals that are occupied at the end of the stream. By a standard coupon-collector argument, the following holds for a given coordinate with probability $1-\delta/k$:
    \begin{itemize}
        \item If more than $B$ intervals are occupied, the samplers return more than $B$ intervals, certifying that the diameter is at least $(1+\eps)r$.
        \item If at most $B$ intervals are occupied, the samplers recover all of them. Then, the two extreme intervals recover the diameter up to additive error $\eta$, which allows us to determine whether it is $\le r$ or $\ge (1+\eps)r$.
    \end{itemize}
\end{remark}

\subsection{Proof of Theorem \ref{thm: upper-bound}}

For $k,d\in\N$, let $\calG(k,d)$ denote the distribution over matrices $\bG\in\R^{k\times d}$ where each coordinate $\bG_{ij}$ is drawn independently from $\calN(0,1)$.

\begin{lemma} \label{lem: ell-infty-embedding}
For any $n\in\N$ and $1 \leq c \leq \sqrt{\log n}$, there exists $k=\tilde{O}(n^{2 / (c^2-1)})$ and a scale $\rho>0$ so that the following holds. For any $X = {x_1,\ldots,x_n}\subset\R^d$, there is a mapping $\boldf : \R^d \to \R^k$ given by $\boldf(x) = \frac{1}{\rho} \bG x$ for $\bG\sim\calG(k,d)$ which satisfies
\[        \|x_i-x_j\|_2 \le \|\boldf(x_i) - \boldf(x_j)\|_\infty \le c \cdot \|x_i-x_j\|_2
        \quad \text{for all } (i,j)\in [n]^2
   \]
with probability at least $0.9$.
\end{lemma}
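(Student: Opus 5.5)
We prove Lemma~\ref{lem: ell-infty-embedding} by analyzing each pair separately and then union bounding over all $\binom{n}{2}$ difference vectors. By linearity, $\boldf(x_i)-\boldf(x_j) = \frac{1}{\rho}\bG(x_i-x_j)$. For a fixed nonzero $u = x_i - x_j$, rotational invariance of the Gaussian gives that $\bG u$ has i.i.d.\ $\calN(0,\|u\|_2^2)$ coordinates. Hence $\|\boldf(x_i)-\boldf(x_j)\|_\infty/\|u\|_2$ is distributed as $\frac{1}{\rho}\max_{t\le k}|\bg_t|$ with $\bg_t\sim\calN(0,1)$ i.i.d. The lemma therefore reduces to a one-dimensional statement: find $k$ and $\rho$ such that
\[ \Pr\Bigl[\rho \le \max_{t\le k}|\bg_t| \le c\rho\Bigr] \ge 1 - \frac{0.1}{n^2}. \]

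The scale is chosen as follows. Let $\rho$ satisfy $\Pr[|\bg|\ge \rho] = p$, where $p$ will be fixed below, and take $k = \lceil \ln(10 n^2)/p\rceil$. For the lower bound (non-contraction),
\[ \Pr[\max_t|\bg_t|<\rho] = (1-p)^k \le e^{-pk} \le \frac{0.1}{n^2}. \]
For the upper bound (non-expansion), a union bound over coordinates gives $\Pr[\max_t|\bg_t| > c\rho] \le k\Pr[|\bg|\ge c\rho]$. By Lemma~\ref{lem: gaussian-tails}, $\Pr[|\bg|\ge c\rho]\lesssim e^{-c^2\rho^2/2}/(c\rho)$ and $p \gtrsim e^{-\rho^2/2}/\rho$ once $\rho\ge 2$, so $\Pr[|\bg|\ge c\rho] \lesssim p^{c^2}\rho^{c^2-1}$ up to constants. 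It then suffices that
\[ k\,p^{c^2}\rho^{c^2-1} \approx \ln(10n^2)\, p^{c^2-1}\rho^{c^2-1} \le \frac{0.1}{n^2}. \]

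The parameter choice is $p^{c^2-1} \approx n^{-2}/(\polylog(n)\,\rho^{c^2-1})$, i.e.\ $p \approx n^{-2/(c^2-1)}/(\rho\cdot\polylog(n)^{1/(c^2-1)})$. Then $k \approx \ln(n)/p = \tilde O(n^{2/(c^2-1)})$, since $\rho = \Theta(\sqrt{\log(1/p)}) = O(\sqrt{\log n})$. The main obstacle is this self-referential bookkeeping, because $\rho$ depends on $p$. It must be checked that the extra factors $\rho^{c^2-1}$ and $\polylog(n)^{1/(c^2-1)}$ only add polylogarithmic overhead in $k$, and that $\rho\ge 2$ so the lower tail estimate in Lemma~\ref{lem: gaussian-tails} is nondegenerate. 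The first point uses $c^2-1$ in the exponent: $(\polylog n)^{1/(c^2-1)}$ is polylog whenever $c\ge 1+\Omega(1)$. The hypothesis $c\le\sqrt{\log n}$ keeps $p$ nontrivial, so that $n^{2/(c^2-1)}$ and the logarithmic factors behave as claimed. If $c$ is very close to $1$, the bound $\tilde O(n^{2/(c^2-1)})$ exceeds $n^2$ and is weak, and the stated bound absorbs the slack in the $\tilde O$.

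Finally, union bounding the failure probability $0.2/n^2$ (lower plus upper) over at most $n^2/2$ pairs gives total failure at most $0.1$. This establishes $\|x_i-x_j\|_2\le\|\boldf(x_i)-\boldf(x_j)\|_\infty\le c\|x_i-x_j\|_2$ for all pairs simultaneously with probability at least $0.9$.
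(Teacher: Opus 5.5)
Your proposal is correct and follows essentially the same route as the paper. Both arguments reduce each pair to the maximum of $k$ i.i.d.\ standard Gaussians, get non-contraction from $(1-p)^k \le e^{-pk}$ and non-expansion from a union bound over coordinates with the Gaussian tail bounds, and then union bound over the $O(n^2)$ pairs. The only difference is which constraint is made tight: the paper fixes $c\rho = \sqrt{2\ln(40kn^2)}$, so the upper tail is immediate, and then chooses $k$ explicitly; you fix $k = \lceil \ln(10n^2)/p \rceil$, so the lower tail is immediate, and then tune $\rho$.

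One small bookkeeping point. Your ``up to constants'' step hides a factor $C^{c^2}$ with $C \approx 1.67$, which is not a constant when $c$ is near $\sqrt{\log n}$. It is harmless, because it enters $p$ only through $C^{-c^2/(c^2-1)} = \Omega(1)$ once $c$ is bounded away from $1$.
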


\begin{proof}
Fix a pair of points $x_i,x_j\in X$ and let $v = x_i-x_j$. We would like to show that
\(        \|v\|_2 \le \frac{1}{\rho} \|\bG v\|_\infty \le c \cdot \|v\|_2
   \)
holds with probability $1 - 1/(10n^2)$. We first observe that each coordinate of $\bG v \in\R^k$ is distributed as $\calN(0,\|v\|_2^2)$. Dividing out by $\|v\|_2$, it suffices to show the following statement: for i.i.d. variables $\by_1,\ldots,\by_L\sim\calN(0,1)$, the maximum
\(        \by = \max_{\ell\in[k]} |\by_\ell|
   \)
satisfies $\by\in[\rho, c\rho]$ with probability $1-1/(10n^2)$.

We set $\rho = \frac{1}{c}\sqrt{2\ln(40Ln^2)}$. We will first argue that $\Pr[\by < \rho] \le 1/(20n^2)$. Note that
\begin{align*}
    \Pr[\by < \rho] = \left( 1 - \Pr[|\by_1| \geq \rho]\right)^{k} \leq \exp \left(-k \cdot \Pr[|\by_1| \geq \rho] \right)
\end{align*}
By Lemma \ref{lem: gaussian-tails}, we have that
$
    \Pr[|\by_1| \geq \rho] \geq \frac{2}{\sqrt{2 \pi}} \cdot \left(\frac{1}{\rho} - \frac{1}{\rho^3} \right) \cdot e^{-\rho^2 / 2} 
$. Observe that since $k \geq 1$ and $c < \sqrt{\log n}$, we have that $\rho > 1$. Therefore, 
 $
     \Pr[|\by_1| \geq \rho] \geq \frac{\alpha}{\rho} \cdot e^{-\rho^2 / 2} 
$
for some constant $\alpha > 0$.

We set the parameter
\begin{align*}
    k := \left\lceil 10 n^{2/(c^2-1)} (\log n) ^{\frac{4c^2}{(c^2-1)}} \right\rceil 
\end{align*}
Then, $\rho = \tilde{O}(\sqrt{\log n})$ and thus
\begin{align*}
    k \cdot \Pr[|\by_1| \geq \rho] \geq k \cdot (Ln^2)^{-1/c^2} \cdot \frac{\alpha}{\rho} \geq k^{(c^2 -1)/c^2} \cdot n^{-2/c^2} \cdot \frac{\alpha}{\rho} 
    \geq \tilde{\Omega}(\log^2 n).
\end{align*}
For large enough $n$ then, we get that 
\begin{align*}
    \Pr[\by < \rho] \leq \exp \left(-k \cdot \Pr[|\by_1| \geq \rho] \right) \leq 1/(20n^2).
\end{align*}
In the other direction, we have by Lemma \ref{lem: gaussian-tails} that 
$
    \Pr[\by \geq c \rho] \leq k \cdot \Pr[|\by_1| \geq c \rho] \leq \frac{\beta k}{\rho} \cdot e^{- c^2 \rho^2 /2}
$
for some constant $\beta > 1/2$. Therefore, 
$
    \Pr[\by \geq c \rho] \leq 1/(20n^2)
$
and we can conclude the lemma. 
\end{proof}

We are now ready to prove Theorem \ref{thm: upper-bound}. 

\begin{proof}[Proof of Theorem \ref{thm: upper-bound}]
    Let $C>1$ denote the desired approximation factor. Note that if $C \le \sqrt{3}$ then $n^{2/(c^2-1)} = \Omega(n)$ and we can compute the diameter exactly by storing the frequency vector. We can therefore assume $C > \sqrt{3}$.

    For $\gamma = \Theta(1/\log n)$, we apply Lemma~\ref{lem: ell-infty-embedding} to obtain an embedding $f : U \to \ell_\infty^K$ with distortion $\alpha = C / (1+\gamma)$, where \[K = \tilde{O}(n^{2/(\alpha^2 - 1)}) = \tilde{O}(n^{2/(C^2 - 1)}),\] which follows because $1/(\alpha^2-1) = 1/(C^2-1) + O(1/\log n)$. We then run the streaming algorithm of Lemma \ref{lem: ell-infty-diameter} with approximation $(1+\gamma)$, whose bit complexity is $\tilde{O}(K\log\Delta / \gamma) = \tilde{O}(n^{2/(C^2-1)})$ as desired.
\end{proof}

\section*{Acknowledgements}

The authors would like to thank Sanjeev Khanna and Josh Alman for many helpful discussions throughout the course of this project. Part of this work was aided by Claude Fable 5.1 to navigate the literature. In particular, Fable 5.1 was very useful for explaining the Forster transform, pointing us to~\cite{HKLM20}, and sketching the proof of Lemma~\ref{lem:fro-lb} and Lemma~\ref{lem:ext-polynomial}. This writeup was produced by the authors and we take full responsibility for the content.

\bibliographystyle{alpha} 
\bibliography{waingarten} 

\appendix 


\section{Proof of Lemma~\ref{lem:alg-to-embedding}} \label{sec: appendix}

In this appendix, we exhibit settings of the parameters $\sfP, \sfU$, along with a third parameter $\sfQ$ that will be useful in the analysis, for distribution $\calD$ to prove Lemma \ref{lem:alg-to-embedding}. We assume that $(T, \Dec)$ is an $(s, \lambda)$-linear sketch. Recall that $N := |\{ 0, \ldots, \Delta\}^d| = (\Delta+1)^d$. 

\AlgToEmbedding*

The settings of $\sfP, \sfU, \sfQ$ we will use are the following. 

\begin{equation} \label{eq: parameters}
    \sfP := \del{O(s \lambda)^{2s}}!
    \qquad
    \sfU := \omega\!\del{ (n-1) \cdot O(s\lambda)^{2s+2}}
    \qquad
    \sfQ := \sfP \cdot O(s\lambda)^{2s + 2} 
\end{equation}

\paragraph{Forbidden Kernel Vectors.} \label{subsec: forbidden} We now have a complete description of $\calD$ with parameters $\sfP, \sfU$. We assume that $(T, \Dec)$ is such that 
\begin{align*}
    \Prx_{\bv \sim \calD}[\Dec(T \bv) \neq \diam_{r, c, n}(\bv)] \leq \delta
\end{align*}
and show how to construct an $(O(\delta), \tau)$-ball-subspace-evading function $f: \hypercube \to \R^s$.  

We first prove the crux of the argument: for almost all pairs $(\bx,\by)$ drawn from $\calD$, the \emph{kernel} of $T$ contains no ``short vectors'' supported on $\calB(\bx,  \tau) \cup \{\by\}$ that place multiplicity $\sfP$ on $y$. Such a vector would let us toggle whether $\by$ is included without changing the output of the sketch, which would contribute to the error on $\calD$.

\begin{claim} \label{claim: forbidden-kernel-vectors}
    For $x, y \in \hypercube$, let $Z_{x,y}(\sfP)$ be the set of vectors $z \in \Z^N$ satisfying
    \begin{enumerate}[label=(\roman*)]
        \item $z_y = \sfP$,
        \item $\|z\|_\infty \le \sfQ$,
        \item $z_u = 0$ for every $u \notin \calB(x, \tau) \cup \{y\}$.
    \end{enumerate}
    Then for at least a $(1 - 4\delta)$ fraction of pairs $(x, y) \in \hypercube \times \hypercube$, we have $\ker(T) \cap Z_{x,y}(\sfP) = \emptyset$.
\end{claim}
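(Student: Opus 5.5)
Call a pair $(x,y)$ \emph{bad} if $\ker(T)\cap Z_{x,y}(\sfP)\neq\emptyset$. The plan is to show that the sketch errs with constant probability on every bad pair, and then use the error bound (\ref{eq:comp}) to conclude that bad pairs make up at most a $4\delta$ fraction. Fix a bad pair $(x,y)$ with $\Ham(x,y)>\tau$; the pairs with $\Ham(x,y)\le\tau$ have probability $o(1)$, and we may take $\delta$ large compared to this term, since the statement only concerns a constant fraction. Pick a witness $z\in\ker(T)\cap Z_{x,y}(\sfP)$. Then $z_y=\sfP$, $z$ is supported on $\calB(x,\tau)\cup\{y\}$, and $\|z\|_\infty\le\sfQ$.

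The core step is a coupling. Condition on $(\bx,\by)=(x,y)$. Let $\bv^{1}$ be the ``far'' instance, with $\bv^1_y=\sfP$ and $\bv^1_u$ uniform in $\{1,\dots,\sfP\sfU\}$ for $u\in\calB(x,\tau)$. Pair it with $\bv^0=\bv^1-z$. This vector has $\bv^0_y=0$, and on the ball it takes the values $\bv^1_u-z_u$. Since $T z=0$, we have $T\bv^0=T\bv^1$, so $\Dec$ gives the same output on both. However, $\diam_{r,c,n}(\bv^1)=1$ and $\diam_{r,c,n}(\bv^0)=0$. The remaining issue is whether $\bv^0$ is distributed (nearly) like the ``close'' instance conditioned on $(x,y)$. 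The map $w\mapsto w-z|_{\calB(x,\tau)}$ shifts each ball coordinate by at most $\sfQ$. The uniform distribution on $\{1,\dots,\sfP\sfU\}^{\calB(x,\tau)}$ and its shift therefore have total variation distance at most $|\calB(x,\tau)|\cdot\sfQ/(\sfP\sfU)\le (n-1)\,O(s\lambda)^{2s+2}/\sfU=o(1)$, by the choice of $\sfU$ in (\ref{eq: parameters}). This also shows that the shifted entries remain in $\{1,\dots,\sfP\sfU\}$, and in particular stay positive, except with $o(1)$ probability.

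Let $e_1(x,y)$ be the error probability of the sketch on the far instance and $e_0(x,y)$ its error probability on the close instance, both conditioned on $(x,y)$. Because $T\bv^0=T\bv^1$ and the two instances need opposite answers, $\Dec$ is wrong on at least one of $\bv^0,\bv^1$. Hence $e_0(x,y)+e_1(x,y)\ge 1-o(1)$ for every bad pair. The overall error equals $\mathbb{E}_{x,y}\bigl[\tfrac12(e_0+e_1)\bigr]\le\delta$. Applying Markov's inequality, the fraction of bad pairs is at most $2\delta/(1-o(1))\le 4\delta$.

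The step I expect to be most delicate is the total variation bound. It needs the shift to be small relative to the support size $\sfP\sfU$, which the parameter choices are designed to ensure. It also needs care at the boundary, since a shifted coordinate could leave the range and become zero or negative; I would handle this by the same TV accounting, using that the boundary events have $o(1)$ probability. The factorial choice of $\sfP$ does not seem to be needed here; it presumably matters later, when producing short kernel vectors with a prescribed $y$-entry.
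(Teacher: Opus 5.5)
Your proof is correct once one threshold is fixed, and it takes a different route from the paper's.

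\textbf{The fix.} The condition $\Ham(x,y)>\tau$ does not give $\diam_{r,c,n}(\bv^1)=1$. The support of $\bv^1$ is $\calB(x,\tau)\cup\{y\}$. Once $\Ham(x,y)>\tau$, its squared Euclidean diameter is $\min\{\Ham(x,y)+\tau,d\}$. So for $\tau<\Ham(x,y)\le 3\tau$ the far instance still has diameter at most $r=2\sqrt{\tau}$. Both $\bv^0$ and $\bv^1$ then require output $0$, and no error is forced. For $3\tau<\Ham(x,y)<d/3-\tau$ the far instance is a $*$ input. Restrict instead to pairs with $\Ham(x,y)\ge d/3$, as the paper does with its set $L$. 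There $\diam(\bv^1)\ge\|x-y\|_2\ge\sqrt{d/3}=cr$. The excluded pairs form an $e^{-\Omega(d)}$ fraction, which is absorbed into the slack between $2\delta(1+o(1))$ and $4\delta$, exactly as you planned for $\Ham(x,y)\le\tau$.

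\textbf{Comparison with the paper.} The paper counts directly.
\begin{itemize}
\item It keeps only \emph{$x$-good} vectors, whose ball entries lie in $\{\sfQ+1,\dots,\sfP\sfU-\sfQ\}$, so that $v+z^{x,y}$ lies exactly in the support of $\calD$.
\item It groups $v$ with all $v+z^{x,y}$, $y\in Y_x$, into a block sharing one sketch value.
\item It charges each block $\min\{\Pr[v],\sum_y\Pr[v+z^{x,y}]\}$. This equals the far-side mass, because the close vector is $2^d$ times likelier than each far one and $m_x\le 2^d$.
\item It sums over disjoint blocks. The factor $(1-2\sfQ/(\sfP\sfU))^{n-1}\ge 0.99$ accounts for the non-good vectors.
\end{itemize}
You instead couple pair by pair and pay a total variation cost of at most $(n-1)\sfQ/(\sfP\sfU)=o(1)$. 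This is the exact analogue of the paper's factor. Both arguments end with at most roughly a $2\delta$ fraction of bad pairs inside $L$.

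Your version is more modular. You need neither disjointness of blocks across different $x$ nor the min-over-block step. Sharing one close instance among all $y$ is handled automatically, because $e_0$ depends only on $x$ and enters the average of $\tfrac12(e_0+e_1)$ over $(x,y)$ with the correct weight. The paper's version, in exchange, never leaves the support of $\calD$, so it has no boundary effects to discuss.

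You are also right that the factorial choice of $\sfP$ plays no role in this claim. It is used in the next claim, to turn a real linear dependence into an integer kernel vector whose $y$-entry is exactly $\sfP$.
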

\begin{proof}
    Consider the following two sets 
    \begin{align*}
        S &:= \{(x,y) \in \hypercube \times \hypercube: \ker(T) \cap Z_{x,y}(\sfP) \ne \emptyset\} \\ \qquad L &:= \{(x,y) \in \hypercube \times \hypercube: \Ham(x,y) \ge d/3\}
    \end{align*}
    We suppose for contradiction that $|S| > 4 \delta \cdot 2^{2d}$. For $d$ a large enough constant, note that $|L| \ge (1 - \delta) 2^{2d}$, which implies $|S \cap L| > 3 \delta \cdot 2^{2d}$. For each $(x,y) \in S \cap L$, fix a \emph{witness} $$z^{x,y} \in \ker(T) \cap Z_{x,y}(\sfP)$$ Additionally, for each $x \in \hypercube$, we define its \emph{level set} $$Y_x := \{ y : (x,y) \in S \cap L\}$$ and let $m_x := |Y_x|$. For a fixed $x$, call a frequency vector $v \in \Z_{\ge 0}^N$ \emph{$x$-good} if $\supp(v) = \calB(x,\tau)$ and $v_u \in \{\sfQ + 1, \dots, \sfP\sfU - \sfQ\}$ for all $u \in \calB(x,\tau)$. Each of the $(\sfP\sfU - 2\sfQ)^{n-1}$ such vectors is sampled by $\calD$ with probability $\frac{1}{2^{d+1}}(\sfP\sfU)^{-(n-1)}$.

    For any $x$-good vector $v$ and $y \in Y_x$, the vector $v + z^{x,y}$ is bounded within $\{1, \dots, \sfP\sfU\}$ on $\calB(x,\tau)$ and takes value $\sfP$ at $y$. Because $(x,y) \in L$, $y \notin \calB(x,\tau)$, meaning $\diam_{r, c, n}(v + z^{x,y}) = 1$ while $\diam_{r, c, n}(v) = 0$. Since $z^{x,y} \in \ker(T)$, however, we have $T(v + z^{x,y}) = Tv$. Thus the linear sketch returns the same value for all inputs in $\{v\} \cup \{v + z^{x,y} : y \in Y_x\}$. Hence, these vectors contribute error at least
    \[
        \min\cbr{\frac{1}{2^{d+1}} (\sfP\sfU)^{-(n-1)},\, m_x \cdot \frac{1}{2^{2d + 1}}(\sfP\sfU)^{-(n-1)}} = \frac{m_x}{2^{2d + 1}}(\sfP\sfU)^{-(n-1)},
    \]
    using the fact that $m_x \le N$. Crucially, these blocks are disjoint across all $x$ and all $x$-good vectors, as $\calB(x,\tau)$ and $\calB(x',\tau)$ are distinct when $x\neq x'$. Thus, we can sum the error over all $x$, obtaining a total error of
    \[
        \sum_{x\in\hypercube} (\sfP\sfU - 2\sfQ)^{n-1} \frac{m_x}{2^{2d+1}}(\sfP\sfU)^{-(n-1)} = \left( 1 - \frac{2\sfQ}{\sfP\sfU} \right)^{\!n-1} \frac{|S \cap L|}{2^{2d+1}} .
    \]
    By~\eqref{eq: parameters}, we have $(1 - \frac{2\sfQ}{\sfP\sfU})^{n-1} \ge 0.99$. The total error is therefore at least $\frac{0.99}{2} \cdot 3 \delta > \delta$, which contradicts the guarantee of~\eqref{eq:comp} from Lemma~\ref{lem:alg-to-embedding} that linear sketch $(T, \Dec)$ has failure probability $\leq \delta$ on distribution $\calD$.
\end{proof}

\paragraph{Constructing a Ball-Subspace-Evading Function.} \label{subsec: evasion}

Using Claim~\ref{claim: forbidden-kernel-vectors}, we show that with high probability over a draw $(\bx,\by)$, the $\by$-column of $T$ does not lie in the span of the columns corresponding to $\calB(\bx,\tau)$. 

\begin{claim} \label{claim: far-point-outside-span}
    Let $(T, \Dec)$ be an $(s, \lambda)$-linear sketch with failure probability at most $\delta$ on distribution $\calD$, as in Lemma \ref{lem:alg-to-embedding}. Then,
    \[
        \Prx_{\bx, \by \sim \hypercube} \left[T^{(\by)} \notin \linspan\del{\{ T^{(u)} : u \in \calB(\bx, \tau) \}}\right] \geq 1 - 4\delta.
    \]
\end{claim}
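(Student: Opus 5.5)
The plan is to prove the contrapositive pointwise: for every pair $(x,y)$ with $T^{(y)} \in \linspan(\{T^{(u)} : u \in \calB(x,\tau)\})$ and $y \notin \calB(x,\tau)$, we will show $\ker(T) \cap Z_{x,y}(\sfP) \neq \emptyset$. Claim~\ref{claim: forbidden-kernel-vectors} then bounds the fraction of such pairs by $4\delta$, which gives the stated bound. The pairs with $y \in \calB(x,\tau)$ need separate handling. For these, $T^{(y)}$ lies in the span trivially, and they form only an $\exp(-\Omega(d))$ fraction of all pairs. Claim~\ref{claim: forbidden-kernel-vectors} already accounts for them, since for such pairs the vector with $z_y=\sfP$ and zeros elsewhere lies in $Z_{x,y}$; it fails to lie in $\ker(T)$ only when $T^{(y)}\neq 0$. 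Alternatively, we can absorb them into the slack, e.g.\ by proving Claim~\ref{claim: forbidden-kernel-vectors} with $L$ in place of all pairs. Either way the statement follows with $4\delta$ up to an additive $o(1)\le \delta$ term, which we fold into the constant.

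For the main step, fix such a pair and let $C$ be the $s \times |\calB(x,\tau)|$ integer matrix of columns $T^{(u)}$ for $u \in \calB(x,\tau)$. By assumption the system $Cw = -T^{(y)}$ has a real solution, and since $C$ and $T^{(y)}$ are integral it has a rational one. We make it small by choosing a maximal linearly independent set of rows and columns of $C$, say an invertible $k\times k$ submatrix $C'$ with $k \le s$. Put $w' = -(C')^{-1}T^{(y)}|_{\text{rows}}$ on those columns and $0$ on the rest. By Cramer's rule each entry is $\det(\cdot)/\det(C')$, and Hadamard's bound controls every numerator and $|\det C'|$ by $(k\lambda^2)^{k/2} \le O(s\lambda)^{s}$. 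The remaining rows are consistent because they are linear combinations of the chosen rows and the full system is solvable.

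Next we scale. Multiplying by $\sfP$ gives an integer vector, because $\sfP = (O(s\lambda)^{2s})!$ is divisible by every nonzero integer of absolute value at most $O(s\lambda)^{s}$, in particular by $\det C'$. Define $z$ by $z_y = \sfP$, $z_u = \sfP w'_u$ on $\calB(x,\tau)$, and $z=0$ elsewhere. Then $Tz = \sfP(T^{(y)} + Cw') = 0$. The entries satisfy $\|z\|_\infty \le \sfP \cdot O(s\lambda)^{s}$, which is at most $\sfQ = \sfP\cdot O(s\lambda)^{2s+2}$ once the hidden constants are matched. So $z \in \ker(T)\cap Z_{x,y}(\sfP)$, as required.

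The main obstacle is the bookkeeping in this last step. We must check that a bounded-denominator, bounded-numerator solution exists even when $C$ is rank-deficient, which the restriction to a maximal nonsingular minor handles. We must also confirm that the constants implicit in \eqref{eq: parameters} dominate the Hadamard bound $(s\lambda^2)^{s/2}$. Both reduce to choosing the $O(\cdot)$ constants in $\sfP$ and $\sfQ$ large enough, so we expect no real difficulty.
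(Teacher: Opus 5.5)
Your proposal is correct and follows the paper's proof. Both restrict to a maximal linearly independent set of columns, obtain an integral solution whose denominator and entries are bounded by a determinant, and scale by $\sfP$ to produce an element of $\ker(T)\cap Z_{x,y}(\sfP)$; the paper uses the Gram determinant $\det(W^{\intercal}W)$, while your nonsingular $k\times k$ minor with Cramer's rule even gives the tighter bound $\|z\|_\infty\le \sfP\cdot O(s\lambda)^{s}$.

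For the pairs with $y\in\calB(x,\tau)$, which the paper's proof silently skips, only your second fix is valid, since $\sfP e_y\in\ker(T)$ only when $T^{(y)}=0$. That second fix does give exactly $1-4\delta$ with no extra loss: the proof of Claim~\ref{claim: forbidden-kernel-vectors} actually bounds the pairs with $\Ham(x,y)\ge d/3$ and $\ker(T)\cap Z_{x,y}(\sfP)\neq\emptyset$ by $3\delta\cdot 4^d$, and all pairs with $\Ham(x,y)<d/3$, including every pair with $y\in\calB(x,\tau)$, number at most $\delta\cdot 4^d$.
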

\begin{proof}
    By Claim~\ref{claim: forbidden-kernel-vectors}, at least $(1 - 4 \delta)$-fraction of pairs $(x,y) \in \hypercube \times \hypercube$ satisfy $\ker(T) \cap Z_{x,y}(\sfP) = \emptyset$. We show that every such pair satisfies the above subspace-evasion property. Fix any such pair $(x,y)$ and suppose for contradiction that $T^{(y)} \in \linspan\{ T^{(u)} : u \in \calB(x,\tau)\}$.

    Let $A\in\Z^{s\times (n-1)}$ denote the submatrix of $T$ consisting of the columns corresponding to $\calB(x,\tau)$. Let $W\in\Z^{s\times k}$ denote the submatrix of $A$ which contains a maximal set of linearly independent columns. We will show that there is no vector $w\in\R^{n-1}$ such that $Aw = T^{(y)}$.

    Suppose for the sake of contradiction that such a $w \in \R^{n-1}$ exists. Then, there would also exist $y \in \R^{k}$ which satisfies $Wy = T^{(y)}$. Because $T^{(y)}$ and $W$ have integer coordinates of magnitude at most $\lambda$, there exists an integer vector $\tilde{a} \in \Z^k$ which satisfies $W \tilde{a} = \sfD \cdot T^{(y)}$, where $\sfD := \det(W^{\t} W)$ is an integer of magnitude at most $O(s \lambda)^{2s}$, and $\| \tilde{a} \|_{\infty} \leq \sfD \cdot O(s\lambda)^{2s+2}$.
    
    Thus, the setting of $\sfP = (O(s \lambda)^{2s})!$ implies that $\sfP / \sfD$ is an integer, and $\tilde{a}' = \sfP / \sfD \cdot \tilde{a}$  satisfies $W \tilde{a}' = \sfP \cdot T^{(y)}$ and $\| \tilde{a}'\|_{\infty} \leq \sfP\cdot O(s \lambda)^{2s+2}$. Let $z \in \mathbb{Z}^{N}$ be defined as $z_y = \sfP$ and $z_u = -\tilde{a}_{u'}'$ for each $u$ such that $u \in \calB(x, \tau)$, where $\tilde{a}_{u'}'$ is the coordinate of $\tilde{a}'$ associated with the column of $T$ indexed by $u$.
\end{proof}

We are now ready to prove Lemma \ref{lem:alg-to-embedding}. 

\begin{proof}[Proof of Lemma~\ref{lem:alg-to-embedding}]
    Let $\sfP, \sfU$ be set as in (\ref{eq: parameters}) and let $(T, \Dec)$ be an $(s, \lambda)$-linear sketch with failure probability at most $\delta$ on distribution $\calD$, as in Lemma \ref{lem:alg-to-embedding}. Consider the embedding $f \colon \hypercube \to \R^s$ given by $f(u) := T^{(u)}$. For any $x \in \hypercube$,
    \[
        \linspan_f(\calB(x, \tau)) = \linspan\bigl( \{ T^{(u)} : u \in \calB(x, \tau) \} \bigr),
    \]
    so Claim~\ref{claim: far-point-outside-span} gives exactly
    \[
        \Prx_{\bx, \by \sim \hypercube}\left[ f(\by) \notin \linspan_f(\calB(\bx, \tau)) \right] \geq 1- 4 \delta,
    \]
    which is the conclusion of Lemma~\ref{lem:alg-to-embedding}.
\end{proof}

\section{Streaming to Sketching via~\cite{KPSW25}} \label{sec: appendix-B}

We describe how Theorem~4 of~\cite{KPSW25} implies Lemma~\ref{lem:stream-to-sketch}. Theorem~4 of~\cite{KPSW25} states the following: \begin{itemize}
    \item If $\calA$ is a randomized streaming algorithm with bounded randomness complexity that computes a scale-invariant function $g$ over all dynamic streams with probability $1-\delta$,
    \item Then for any distribution $\calD$ supported on $\Z^N\cap g^{-1}(\{0,1\})$, there exists an $(s, O(N^{N/2}))$-linear sketch $(T,\Dec)$ with $s\le \calS^+(\calA,1)$ where \[\Prx_{\bv\sim\calD}[\Dec(T\bv)=g(\bv)] \ge 1-O(\delta).\]
\end{itemize}

Scale-invariant functions $g$ are defined those where $g(\lambda v) = g(v)$ for any $\lambda\in\N$. In particular, the diameter decision function $\diam_{r,c,n}$ is scale-invariant since it only depends on the support of the frequency vector $v$.

From this, we can immediately conclude the $m=1$ case of Lemma~\ref{lem:stream-to-sketch}.\footnote{While Theorem~4 of~\cite{KPSW25} requires that $\diam_{r,c,n}(v)\in\{0,1\}$ for all $v\in\supp(\calD)$, we note that the distribution $\calD$ from Definition~\ref{def:hard} only satisfies $\diam_{r,c,n}(\bv)\in\{0,1\}$ with high probability over $\bv\sim\calD$. However, one can simply apply Theorem~4 on the appropriate conditional distribution, recovering the same guarantee with an $o(1)$ loss in the failure probability.} The extension to general $m\in\N$ is not immediate, but is implicit in the proof of Theorem~4. The proof follows the following roadmap:\begin{itemize}
    \item \textbf{Path-independence:} First, it shows that there exists a \emph{path-independent} algorithm $\calB$ with space complexity $\calS^+(\calB,m)\le \calS^+(\calA,m)$ for all $m\in\N$.
    \item \textbf{Constructing a basis:} Second, via Lemma~2.12 in~\cite{KPSW25}, it constructs a basis $b_1,\ldots,b_N$ of $\R^N$ where the final $N-t \le \calS^+(\calB,1)$ basis vectors have entries bounded by $N^{N/2}$ and span the orthogonal complement of the \emph{zero-set subspace} of $\calB$.
    \item \textbf{Building the sketch:} Finally, it builds a sketching matrix $T$ by taking the vectors $b_{t+1},\ldots,b_N$ as the rows, and then designs an appropriate decoding function. By construction, the sketch $T$ has dimension at most $\calS^+(\calB,1) \le \calS^+(\calA,1)$ and entries bounded by $O(N^{N/2})$. 
\end{itemize}

The only part of the argument specific to $m=1$ is the second step via Lemma~2.12. However, it is clear that the statement extends to general $m$, only affecting the final sketch dimension and matrix entry bound.

\begin{itemize}
    \item \textbf{Sketch dimension:} By Lemma~C.4 of~\cite{KPSW25}, one has $\calS^+(\calB,m) \ge (N-t)\log(m+1)$ for any $m\in\N$. This immediately gives $s = N-t \le \calS^+(\calB,m) / \log(m+1)$.
    \item \textbf{Matrix entry bound:} Originally, the upper bound on $\|b_{t+1}\|_\infty, \ldots, \|b_N\|_\infty$ is given by the determinant of a $t\times t$ matrix with entries in $\{-1,0,1\}$, which is at most $t^{t/2} \le N^{N/2}$. For general $m$, we instead take the determinant of a matrix with entries in $\{-m,\ldots,m\}$, giving the bound $(m^2 N)^{N/2}$.
\end{itemize} 

\end{document}